\definecolor{fabio}{rgb}{0,0.3,0.7}
\newcommand{\fabio}[1]{\textcolor{fabio}{#1}}
\newcommand{\marginfab}[1]{\marginpar{\tiny{\fabio{#1}}}}
\newcommand{\steven}[1]{\textcolor{black}{#1}}
\newtheorem{observation}{Observation}
\begin{document}

\title{Finding the most parsimonious or likely tree in a network with respect to an alignment
\thanks{Kelk and Pardi contributed equally to this article.}
}
%


\titlerunning{Finding the most parsimonious or likely tree in a network with respect to an alignment}        

\author{Steven Kelk         \and
        Fabio Pardi \and Celine Scornavacca \and Leo van Iersel 
}


\institute{S. Kelk \at
Department of Data Science and Knowledge Engineering (DKE), Maastricht University, P.O. Box 616, 6200 MD Maastricht, Netherlands.\\
              \email{steven.kelk@maastrichtuniversity.nl}           
           \and
	   F. Pardi \at
  Laboratoire d’Informatique, de Robotique et de
Micro\'{e}lectronique de Montpellier (LIRMM),
Universit\'{e} de Montpellier, CNRS, Institut de Biologie Computationnelle (IBC),
34095 Montpellier Cedex 5, France.\\
              \email{pardi@lirmm.fr}           
           \and
	   C. Scornavacca \at
Institut des Sciences de l'Evolution, 
Universit\'{e} de Montpellier, CNRS, IRD, EPHE\\
Institut de Biologie Computationnelle (IBC),
34095 Montpellier Cedex 5, France.\\
              \email{Celine.Scornavacca@umontpellier.fr}           
           \and
	  L. van Iersel \at
Delft Institute of Applied Mathematics, Delft University of Technology, Mekelweg 4,
2628 CD, Delft, Netherlands.\\
              \email{l.j.j.v.iersel@gmail.com}
}

\date{Received: date / Accepted: date}

\maketitle

\begin{abstract}
Phylogenetic networks are often constructed by merging multiple conflicting phylogenetic signals into a directed acyclic graph. It is interesting to explore whether a network constructed in this way induces biologically-relevant phylogenetic signals that were not present in the input.

Here we show that, given a multiple alignment $A$ for a set of taxa $X$ and a rooted phylogenetic network $N$ whose leaves are labelled by $X$, it is NP-hard to locate the most parsimonious phylogenetic tree displayed by $N$ (with respect to $A$) even when the level of $N$ - the maximum number of reticulation nodes within a biconnected component - is 
1 and $A$ contains only 2 distinct states. \steven{(If, additionally, gaps are allowed the problem becomes APX-hard.)} \steven{We also show that under the same conditions, and assuming 
a simple binary symmetric model of character evolution,
finding the most likely tree displayed by the network is NP-hard.} 

These negative results contrast with earlier work on parsimony in which it is shown that if $A$ consists of a single column the problem is fixed parameter tractable in the level. 
We conclude with a discussion of why, despite \steven{the NP-hardness, both the parsimony and likelihood problem can likely be well-solved in practice.}

\keywords{
Phylogenetic tree \and
Phylogenetic network  \and
Maximum parsimony  \and
Maximum likelihood \and
NP-hardness \and 
APX-hardness. 
}
\subclass{MSC code 92D15 \and MSC code 68Q25 \and MSC code 92D20}
\end{abstract}

\section{Introduction}
Rooted phylogenetic networks are generalizations of rooted phylogenetic trees which allow horizontal evolutionary events such as horizontal gene transfer, recombination and hybridization to be modelled (\cite{Huson2010,Morrison2011}). This is achieved by allowing nodes with indegree 2 or higher, known as \emph{reticulation} nodes. Recent years have seen an explosion of interest in constructing rooted phylogenetic networks, fuelled by the growing awareness that \emph{incongruence} in phylogenetic and phylogenomic datasets is not simply a question of evolutionary ``noise'', but sometimes the result of evolutionary phenomena more complex than speciation and mutation (e.g. \cite{Zhaxybayeva2011,Abbott2013,Vuilleumier2015}).

Although many modelling questions surrounding the construction of phylogenetic networks are still to be answered, it is commonplace to associate a rooted phylogenetic network with the set of rooted phylogenetic trees that it contains (``displays''). Informally speaking, a rooted phylogenetic network displays a rooted phylogenetic tree if the tree can be topologically embedded inside the network. A network is not necessarily defined by the set of trees it displays,
but the notion of display is nevertheless a recurring theme in the literature, since networks themselves are often constructed by merging phylogenetic trees subject to some optimality criterion.
It is well-known that it is NP-hard to determine whether a network displays a given tree (\cite{kanj2008seeing}), although on many restricted classes of phylogenetic networks the problem is polynomial-time solvable (\cite{van2010locating,fakcharoenphol2015faster,gambette2015locating}).

Although it is important to be able to determine whether a network displays a given tree, we may also wish to ask what the ``best'' tree is within the network, subject to some optimality criterion. For example, given a network $N$ and a multiple alignment $A$, we may wish to ask for the single tree $T$ displayed by $N$ with lowest parsimony score with respect to $A$. Similarly, if the network is decorated by edge lengths or probabilities, we may wish to identify the most \emph{likely} tree displayed by the network, that is, the tree that maximizes the probability 
of generating $A$, under a given model of evolution. 
Such questions are natural, as the two following examples show. First, phylogenetic networks are often constructed by topologically merging incongruent phylogenetic signals (e.g. \cite{kelk2014constructing}), and it is insightful to ask whether the network thus constructed displays trees which have interesting properties (such as a low parsimony score or a high likelihood) which were not in the input. Second, we may wish to perform classical phylogenetic tree construction under criteria such as maximum parsimony or maximum likelihood (e.g. \cite{jin2006maximum,jin2007efficient}), but within the restricted space of trees displayed by a given network.

Problems of the above kind are already known to be NP-hard, since it is NP-hard to determine the most parsimonious tree $T$ displayed by a given network $N$ even when the alignment $A$ consists of a \emph{single} column and the network is binary (\cite{fischer2015computing}). However, the gadgets used in that hardness reduction produce networks with very high \emph{level}, where 
level is the maximum number of reticulation nodes in a biconnected component of the network. On the positive side, the same article shows that the problem on an alignment with a single column is FPT in the level of the network. This means that, on a network with level $k$, the problem can be solved in time $f(k) \cdot \text{poly}(n)$ where $f$ is a function that depends only on $k$ and $n$ is the size of the network.
Such results are useful in practice, when (as is often the case) $k$ is small.

The question emerges whether the positive FPT result goes through when $A$ does not consist of a single column, but potentially many columns -- a problem introduced more than one decade ago (\cite{nakhleh2005reconstructing}). Here we show that this is not the case. We prove the rather negative result that locating the most parsimonious tree in a rooted binary network $N$ is NP-hard, even under the following restricted circumstances: (1) each biconnected component of the network contains exactly one reticulation node (i.e. is ``level-1''); (2) each biconnected component of the network has exactly three outgoing arcs; (3) the alignment $A$ consists of two states. If indel symbols are permitted then the problem is not only NP-hard, but also difficult to approximate well (APX-hard). If any of the conditions (1)-(3) are further strengthened (respectively: the network becomes a tree; the reticulation nodes become redundant; the alignment becomes uninformative), the problem becomes trivially solvable, so in some sense this is a ``best possible'' (or the ``worst possible'', depending on your perspective) hardness result. Next, we consider the question of identifying the most likely tree in the network. We obtain NP-hardness under the same restrictions (1)-(3), 
subject to the simple binary symmetric model of character evolution.
It is no coincidence that restrictions (1)-(3) again apply, since the hardness of the likelihood question is established by reducing the parsimony variant of the problem to it. Specifically, we show that the most likely tree displayed by 
a network with sufficiently short branches
is necessarily also a most parsimonious tree. 

Although the main results in this paper are negative, some reasons for hope are given in the conclusion.

 
\section{Preliminaries}

A \emph{rooted binary phylogenetic network} $N$ on a set $X$ of taxa is a directed acyclic graph where the leaves (nodes of indegree-1 and outdegree-0) are bijectively labelled by $X$, there is a unique root (a node of indegree-0 and outdegree-2) and all other nodes are either tree nodes (indegree-1 and outdegree-2) or reticulation nodes (indegree-2 and outdegree-1). For brevity we henceforth simply use the term \emph{network}. A \emph{rooted binary phylogenetic tree} (henceforth \emph{tree}) is a phylogenetic network without any reticulation nodes. A \emph{cherry} is a pair of taxa that share a common parent. A \emph{rooted binary caterpillar} is a tree with exactly one cherry.

The \emph{level} of a network $N$ is the maximum number of reticulation nodes in a biconnected component of the undirected graph underpinning $N$. In this article we will focus exclusively on level-1 networks. In level-1 networks, maximal biconnected components that are not single edges are simple cycles that contain exactly one reticulation node; such biconnected components are  called \emph{galls}. An arc whose tail (but not head) is a node of a gall is called an \emph{outgoing} arc.

A \emph{character} $f$ is a surjective mapping
$f: X \rightarrow S$ where $S$ is a set of discrete \emph{states}. When $S$ contains two 
states we say that $f$ is a \emph{binary} character. Given a tree $T = (V,E)$ and a character $f$, both on $X$, we say that $\hat{f}: V \rightarrow S$ is an \emph{extension} of $f$ to $T$ if
$\hat{f}(x) = f(x)$ for all $x \in X$. The \emph{number of mutations} induced by $\hat{f}$ (on $T$), denoted $l_{\hat{f}}(T)$, 
is the number of edges $\{u,v\} \in E$ such that $\hat{f}(u) \neq \hat{f}(v)$. The \emph{parsimony score} of $f$ with respect to $T$, denoted $l_f(T)$, is the minimum number of mutations induced ranging over all extensions $\hat{f}$ of $f$. Any extension that achieves
this minimum is called an \emph{optimal} extension. An optimal extension can be computed in polynomial time using Fitch's algorithm (\cite{fitch1971}), which for completeness we describe in the appendix along with some of its relevant mathematical properties. (Note that there potentially exist optimal extensions that cannot be generated by Fitch's algorithm.)

For a network $N$ and a tree $T$, both on $X$, we say that $N$ \emph{displays} $T$ if there exists a subtree $T'$ of $N$ such that $T'$ is a subdivision of $T$.  An equivalent definition of ``displays'' relies on the notion of a \emph{switching}, where a switching is a subtree $N'$ of $N$ obtained by, for each reticulation node $u$, deleting exactly one of $u$'s incoming edges. $N$ displays $T$ if and only if there exists some switching $N'$ of $N$ and a subdivision $T'$ of $T$ such that $T'$ is a subgraph of $N'$. In both definitions we say that $T'$ is an \emph{image} of $T$ inside $N$.

The softwired parsimony score\footnote{Two other definitions of the parsimony score of a network exist in the literature: the \emph{hardwired} parsimony score (\cite{kannan2012maximum}) and parental parsimony score (\cite{vanIersel2017}); see the latter manuscript for a discussion about the differences of these three models.} of a network $N$ with respect to $f$ is the minimum, ranging over all trees $T$  displayed by $N$, of $l_f(T)$.

We now extend the above concepts to \emph{alignments}. An alignment $A$ is simply 
a linear ordering of characters. In this paper the linear ordering is irrelevant so we can arbitrarily impose an ordering and write $f \in A$ without ambiguity. An alignment can naturally be represented as a matrix with $|X|$ rows and $|A|$ columns; we therefore use the terms characters and columns interchangeably (and, following the use of alignments in practice, we sometimes refer to the rows of the matrix as \emph{sequences}). The parsimony score of a tree $T$ with respect to $A$, denoted $l_{A}(T)$, is simply $\sum_{f \in A}l_f(T)$. 

When extending this concept to networks, two definitions have been proposed: the parsimony score of a network  with respect to an alignment  $A$, denoted  $l_{A}(N)$, can be defined as
\begin{enumerate}
\item $\displaystyle\sum_{f \in A}
\displaystyle\min_{T \in\mathcal{T}(N)}l_f(T)$
\item[or]
\item $\displaystyle\min_{T \in\mathcal{T}(N)}\sum_{f \in A}l_f(T)$
\end{enumerate}
where $\mathcal{T}(N)$ is the set of trees displayed by the network. According to the first definition (introduced in \cite{hein1990reconstructing}), each character can follow a different tree displayed by the network, while in the second one (introduced in \cite{nakhleh2005reconstructing}) all characters of the alignment follow the same tree. In this paper, we will adopt the latter definition, and a
tree $T$ that is the
minimizer of this sum is called the \emph{most parsimonious (MP) tree displayed by $N$} (with respect to $A$).

Note that in applied phylogenetics alignments often contain indels, encoded using a gap symbol ``-''. From the parsimony perspective it is not uncommon to treat these symbols as wildcards that  do not induce mutations; the taxon ``does not care'' what state it is assigned. (Note however that extensions are \emph{not} allowed to contain gap symbols). To compute $l_f(T)$ when a character $f: X \rightarrow S$ maps some of its taxa to the gap symbol, we can run Fitch's algorithm with a slight modification to the bottom-up phase: for each taxon $x$ such that $f(x) = $``-'', we assign the entire set of states
$S$ to $x$.
Moreover, as the following observation shows, the use of ``-'' symbols does not make the problem of identifying the most parsimonious tree displayed by a network significantly harder.

\begin{observation}
\label{obs:recode}
Let $A$ be an alignment for a set of taxa $X$ and let $N$ be a phylogenetic network on $X$. Suppose $A$ uses the states $\{0,1,\text{``-''}\}$.  Let $k$ denote the total number of gap symbols in $A$. 
In polynomial time we can construct an alignment $A'$ on $2|X|$ taxa, which uses only states $\{0,1\}$, and a network $N'$ on $2|X|$ taxa, such that there is a polynomial-time computable bjiection $g$ mapping trees displayed by $N$ to trees displayed by $N'$. This bijection $g$ has the property that, for each tree $T$ displayed by $N$, $l_{A'}( g(T) )= l_A(T) + k$. Consequently, $T$ is a most parsimonious tree displayed by $N$ (wrt $A$) if and only if 
$g(T)$ is a most parsimonious tree displayed by $N'$ (wrt $A'$).
\end{observation}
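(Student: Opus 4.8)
The plan is to replace every leaf by a cherry, using the second leaf of each cherry to simulate a gap. Concretely, I would build $N'$ from $N$ by adding, for each taxon $x \in X$, a fresh taxon $x'$ together with a new tree node $v_x$ placed exactly where the leaf $x$ sat in $N$; this node $v_x$ then has the two leaves $x$ and $x'$ as its children. Since this operation only attaches cherries below the former leaves, it leaves every gall and every reticulation of $N$ untouched, and $N'$ is again a rooted binary network, now on $2|X|$ taxa. To build $A'$, for each character $f \in A$ I would define a binary character $f'$ on $2|X|$ taxa by setting $f'(x)=f'(x')=f(x)$ whenever $f(x)\in\{0,1\}$, and $f'(x)=0,\ f'(x')=1$ whenever $f(x)=\text{``-''}$. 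The intuition is that a cherry whose two leaves carry the same state behaves exactly like a single leaf carrying that state, while a cherry whose leaves carry the two distinct states $0$ and $1$ forces precisely one mutation no matter how its parent is labelled, i.e.\ it acts as a wildcard leaf at a fixed surcharge of $1$.

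Next I would define the bijection $g$. A switching of $N'$ is obtained by deleting one incoming arc at each reticulation, exactly as in $N$; since the reticulation structure is identical, switchings of $N$ and of $N'$ correspond one-to-one, and the tree displayed by a switching of $N'$ is obtained from the corresponding tree displayed by $N$ by expanding each leaf $x$ into the cherry $\{x,x'\}$ at $v_x$. Taking $g(T)$ to be this cherry-expansion of $T$ therefore yields a bijection between $\mathcal{T}(N)$ and $\mathcal{T}(N')$ that is computable, in both directions, in polynomial time.

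The heart of the argument is then a per-column identity. Fix $f\in A$ and a tree $T\in\mathcal{T}(N)$, and consider any extension of $f'$ on $g(T)$. For a taxon $x$ with $f(x)=a\in\{0,1\}$, the two cherry arcs at $v_x$ contribute $0$ mutations if $v_x$ is labelled $a$ and $2$ otherwise, so in an optimal extension $v_x$ is labelled $a$ and plays exactly the role of the leaf $x$ of $T$. For a taxon $x$ with $f(x)=\text{``-''}$, the two cherry arcs contribute exactly one mutation regardless of the label of $v_x$, which is hence free to take either state. Summing the cherry contributions therefore gives exactly $k$ (the total number of gaps in $A$), independently of the labelling, and what remains to be minimised is precisely the number of mutations on the copy of $T$ carried by the nodes $v_x$, where $v_x$ is forced to $f(x)$ for non-gap taxa and is free for gap taxa. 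By the gap-handling convention for Fitch's algorithm recalled above, that minimum is exactly $l_f(T)$. Hence $l_{f'}(g(T)) = l_f(T) + (\text{number of gaps in } f)$, and summing over all columns yields $l_{A'}(g(T)) = l_A(T) + k$.

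Finally, since $k$ does not depend on $T$, the map $T\mapsto g(T)$ preserves the relative order of parsimony scores; as $g$ is a bijection, $T$ minimises $l_A$ over $\mathcal{T}(N)$ if and only if $g(T)$ minimises $l_{A'}$ over $\mathcal{T}(N')$, which is the claimed equivalence. I expect the only real content to be the cherry-gadget computation of the previous paragraph, namely verifying that a $\{0,1\}$-cherry reproduces a wildcard leaf at a fixed additive cost of $1$; doubling every taxon rather than only the gap-bearing ones is what keeps the bijection and the bookkeeping uniform, and the surjectivity requirement on characters causes no trouble, since any constant column contributes $0$ to every parsimony score.
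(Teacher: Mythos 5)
Your construction and argument coincide with the paper's proof: both split every taxon into a cherry, encode a gap as a $\{0,1\}$-cherry that forces exactly one mutation and leaves the parent's state free, and give identical states to both cherry leaves otherwise. Your write-up merely spells out the exchange argument and the per-column bookkeeping that the paper leaves implicit (via Fitch's bottom-up phase assigning $\{0,1\}$ to the cherry parent), so it is correct and essentially the same proof.
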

\begin{proof}
To obtain $N'$ from $N$ we split each taxon $x_i$ into a cherry $\{x^1_i$, $x^2_i\}$. If, for a given character, $x_i$ had state $0$ (respectively, $1$), we give both $x^1_i$ and $x^2_i$ the
state $0$ (respectively, $1$). If $x_i$ had state ``-'' we give $x^1_i$ state 0 and $x^2_i$ state 1. The idea is that by encoding a gap symbol as a $\{0,1\}$ cherry a single mutation is unavoidably incurred (on one of the two edges leading into $x^1_i$ and $x^2_i$) and thus the state of the parent of the cherry in any (optimal) extension is irrelevant. The parent thus simulates the original gap symbol: the bottom-up phase of Fitch's algorithm will always allocate the subset of states $\{0,1\}$ to the parent. (The bijection $g$, and its inverse, are trivially
computable in polynomial time by splitting each taxon into a cherry, or collapsing cherries,
respectively). \qed
\end{proof}

We defer preliminaries relating to likelihood until Section \ref{sec:MLtree}.

Let $G$ be an undirected graph. An \emph{orientation} of $G$ is a directed graph $G'$ obtained by replacing each edge $\{u,v\}$ of $G$ with exactly one of the two arcs $(u,v)$ or $(v,u)$. Given an orientation $G'$ of $G$, a \emph{source} is a node that has only outgoing arcs, and a \emph{sink} is a node that has only incoming arcs. Let $msso(G)$ denote the maximum, ranging over all possible orientations $G'$ of $G$, of the sum of the number of sources and sinks in $G'$. MAX-SOURCE-SINKS-ORIENTATION is the problem of computing $msso(G)$. A cubic graph is a graph where every node has degree 3.\\
\\
The proofs of the following are deferred to the appendix. These two results form the foundation of the hardness results given in the next section.\\

\begin{lemma}
\label{lem:sourcesinkhard}
MAX-SOURCE-SINKS-ORIENTATION is NP-hard on cubic graphs.
\end{lemma}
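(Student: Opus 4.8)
The plan is to first replace the orientation problem by a purely structural optimization problem, and then to establish NP-hardness of the latter. The key observation I would prove is that $msso(G)$ equals the maximum number of vertices of $G$ that induce a bipartite subgraph (equivalently, the maximum of $|S|+|T|$ over pairs of disjoint independent sets $S,T$). For the easy direction, fix any orientation $G'$; its set of sources is an independent set (two adjacent sources would force the connecting edge to point away from both), its set of sinks is an independent set, and the two sets are disjoint, since in a cubic graph no vertex of degree $3$ can simultaneously have in-degree $0$ and out-degree $0$. Hence the sources and sinks together induce a bipartite subgraph, so $msso(G)$ is at most the maximum induced bipartite subgraph. For the converse, given disjoint independent sets $S$ and $T$, I would orient every edge incident to $S$ away from $S$, every edge incident to $T$ into $T$, and all remaining edges arbitrarily; independence of $S$ and of $T$ guarantees that no edge receives conflicting instructions (an $S$--$T$ edge is consistently oriented from its source-endpoint to its sink-endpoint), making every vertex of $S\cup T$ a source or a sink. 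This shows the two quantities coincide, so MAX-SOURCE-SINKS-ORIENTATION on cubic graphs is exactly the problem of computing the maximum induced bipartite subgraph (equivalently, the number of vertices of $G$ minus the minimum number of vertices whose deletion makes $G$ bipartite) on cubic graphs.

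Second, I would establish NP-hardness of the maximum induced bipartite subgraph problem on cubic graphs by reduction from a known NP-hard problem on cubic (or bounded-degree) graphs --- for instance Maximum Independent Set on cubic graphs, or the general NP-hardness of vertex bipartization. The mechanism is that an odd cycle can be retained in an induced bipartite subgraph only after at least one of its vertices is deleted, so one designs constant-size gadgets containing triangles (or other short odd cycles) whose cheapest ``hitting sets'' encode the choices of the source instance, while the even-cycle parts of the gadgets contribute no obstruction. The reduction must output a graph in which every vertex has degree exactly $3$; this is arranged with standard degree-reduction gadgets (splitting high-degree vertices into connected copies, and subdividing edges an even number of times so as to preserve cycle parities), together with the verification that these transformations alter the minimum bipartization number only by a precomputable additive constant.

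The main obstacle, and the part I expect to require the most care, is the soundness of this reduction under the simultaneous constraints of $3$-regularity and of controlling odd cycles. It is easy to \emph{force} at least a certain number of deletions; the work is in arguing that an \emph{optimal} induced bipartite subgraph cannot do better than the deletions prescribed by a genuine solution to the source instance --- in particular, that the gadgetry introduces no unintended even detours that would let the optimizer keep vertices ``for free'', and that deleting a single shared high-degree vertex cannot cheaply repair many gadgets at once. Obtaining a clean linear correspondence between the optimum of the source problem and $msso(G)$ of the constructed cubic graph, rather than merely a one-sided inequality, is where the bulk of the effort lies; once that correspondence is secured, NP-hardness transfers back to MAX-SOURCE-SINKS-ORIENTATION through the characterization of the first step.
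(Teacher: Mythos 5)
Your first step is correct and takes a genuinely different route from the paper: you show that $msso(G)$ equals the maximum of $|S|+|T|$ over pairs of disjoint independent sets $S,T$, i.e.\ the size of a maximum induced bipartite subgraph. Both directions of that equivalence check out (in any orientation the sources form an independent set, the sinks form an independent set, and the two are disjoint since every vertex of a cubic graph has at least one incident arc; conversely, orienting all edges out of $S$ and into $T$ is conflict-free precisely because $S$ and $T$ are independent and disjoint). The paper instead proves the exact identity $msso(G)=maxcut(G)-|V|/2$ and reduces from CUBIC-MAX-CUT; that identity comes with explicit polynomial-time maps between arbitrary (not just optimal) cuts and orientations, which the paper reuses verbatim to obtain the $(1,1)$ L-reduction for the APX-hardness corollary, so the MAX-CUT route buys the inapproximability result essentially for free.

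As written, however, your proposal has a genuine gap: all of the NP-hardness content is deferred to the second step, where you only describe in general terms a gadget-based reduction to maximum induced bipartite subgraph on cubic graphs, and you yourself flag that the soundness of that reduction --- ruling out that an optimal solution beats the intended one --- ``is where the bulk of the effort lies.'' No concrete gadget is specified, no source problem is fixed, and no correspondence between optima is established, so nothing NP-hard has actually been reduced to $msso$. The gap is closable without building gadgets yourself: minimum vertex bipartization (equivalently, maximum induced bipartite subgraph) is known to be NP-hard on cubic graphs (e.g.\ Choi, Nakajima and Rim, \emph{SIAM J.\ Discrete Math.}, 1989), and citing such a result together with your step-one equivalence would yield a complete and arguably shorter proof of the lemma. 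Absent that citation or a worked-out reduction, the proposal establishes only a (correct and useful) reformulation of $msso$, not its hardness.
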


\begin{corollary}
\label{cor:sourcesinksapxhard}
MAX-SOURCE-SINKS-ORIENTATION is APX-hard on cubic graphs.
\end{corollary}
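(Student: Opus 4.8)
The plan is to upgrade the NP-hardness of Lemma~\ref{lem:sourcesinkhard} to APX-hardness by exhibiting an \emph{approximation-preserving} (L-)reduction from a problem already known to be APX-hard on cubic graphs, rather than merely a Karp reduction. The natural engine is a structural reformulation of MAX-SOURCE-SINKS-ORIENTATION that I would prove first: for every graph $G$ with minimum degree at least $1$ (in particular every cubic graph), $msso(G)$ equals the maximum number of vertices that induce a bipartite subgraph of $G$, call it $\mathrm{MIBS}(G)$. One direction is immediate --- in any orientation, two adjacent vertices cannot both be sources (the shared edge would have to leave both) nor both sinks, so the set $W$ of all sources and sinks splits into two independent sets and hence induces a bipartite subgraph, giving $msso(G)\le \mathrm{MIBS}(G)$. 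For the converse, given $W$ with $G[W]$ bipartite I would fix a proper $2$-colouring of $G[W]$, declare one colour class the intended sources and the other the intended sinks, orient every edge incident to an intended source away from it and every edge incident to an intended sink into it (orienting edges inside $V(G)\setminus W$ arbitrarily), and check that no edge receives conflicting instructions; this realises all of $W$ as sources or sinks, so $msso(G)\ge |W|$. Since no vertex of a cubic graph is isolated, each source/sink is counted once and the two bounds match.

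With this identity in hand, MAX-SOURCE-SINKS-ORIENTATION on cubic graphs \emph{is} the problem of computing a maximum induced bipartite subgraph, and I would L-reduce an APX-hard cubic problem --- MAX-CUT on cubic graphs is the natural choice --- to it. The edges-to-vertices translation is the crux: I would replace each edge of the MAX-CUT instance $H$ by a small constant-size cubic gadget so that the choice of which gadget vertices survive in a maximum induced bipartite subgraph of the constructed graph $G$ encodes, for each edge, whether it is cut, while keeping $G$ three-regular. The two L-reduction inequalities would then be verified: (i) $\mathrm{OPT}_{msso}(G)\le \alpha\cdot \mathrm{OPT}_{\mathrm{CUT}}(H)$ for a constant $\alpha$, which follows because $|V(G)|$ and $|E(G)|$ are linear in $|E(H)|$ and because on a cubic graph $\mathrm{OPT}_{\mathrm{CUT}}(H)=\Theta(|E(H)|)$ (every cubic graph has a cut containing at least half of its edges), so the additive gadget terms are absorbed; and (ii) from any orientation of $G$ one can recover a cut of $H$ whose deficit from the optimum is at most a constant multiple of the orientation's deficit from $msso(G)$.

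The main obstacle is step (ii) together with the gadget design it presupposes. For the pullback I would first argue that any induced bipartite subgraph of $G$ can be ``cleaned up'' without loss so that each gadget is left in one of a few canonical states, each unambiguously decodable as ``cut'' or ``uncut''; establishing that this local repair never decreases the objective (so that near-optimal orientations decode to near-optimal cuts) is where the real work lies, and it is sensitive to exactly how the gadgets are wired together at the original vertices of $H$ while respecting $3$-regularity. A secondary subtlety is bookkeeping the additive constants: APX-hardness, unlike NP-hardness, is destroyed if the constant gap between the ``cut'' and ``uncut'' contributions of a gadget is swamped, so I would keep the per-gadget contribution uniform and lean on $\mathrm{OPT}_{\mathrm{CUT}}(H)=\Theta(|E(H)|)$ to guarantee the required linear relationship between the two optima. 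If instead the construction proving Lemma~\ref{lem:sourcesinkhard} is already a reduction from such an APX-hard cubic problem with constant-size gadgets and a uniform per-edge objective contribution, the Corollary follows with no new construction: it then suffices to verify that that reduction satisfies (i) and (ii), which is exactly the programme above.
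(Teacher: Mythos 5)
Your opening reformulation is correct and rather elegant: in any orientation two adjacent vertices cannot both be sources nor both sinks, so the sources and sinks form two independent sets whose union induces a bipartite subgraph, and conversely a proper $2$-colouring of an induced bipartite subgraph can be realised as sources and sinks without conflicting edge orientations. So $msso(G)$ does equal the maximum size of a vertex set inducing a bipartite subgraph. The genuine gap is everything after that. Your plan rests on an edge-gadget L-reduction from cubic MAX-CUT to this induced-bipartite-subgraph problem, but the gadget is never specified, and you yourself identify the ``cleanup'' lemma --- that any near-optimal solution can be locally repaired into canonical, decodable gadget states without loss --- as ``where the real work lies''. That lemma and the gadget wiring \emph{are} the L-reduction; without them condition (ii) is unverified and nothing is proved. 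Designing such gadgets while preserving $3$-regularity and a uniform per-edge contribution is not routine, so this cannot be treated as a detail.

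The contingency you mention in your final sentence is exactly what happens, and it makes the whole gadget programme unnecessary. The proof of Lemma~\ref{lem:sourcesinkhard} is already a reduction from CUBIC-MAX-CUT in which the forward map is the \emph{identity} on the graph, and it establishes $msso(G)=maxcut(G)-|V|/2$. This immediately gives condition (i) with $\alpha=1$, since $msso(G)\le maxcut(G)$. For condition (ii), both directions of that proof are constructive and apply to \emph{arbitrary}, not just optimal, solutions: given any orientation with $s$ sources and sinks, reversing every edge oriented from the high-indegree side $W$ to the low-indegree side $U$ can only increase the count to some $s'\ge s$, and the resulting $U$-to-$W$ edges form a cut of size $s'+|V|/2$, whence $maxcut(G)-(s'+|V|/2)\le msso(G)-s$ and $\beta=1$ suffices. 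So the corollary follows from a $(1,1)$ L-reduction with no new construction; your proposal, as written, replaces this one-paragraph verification with an unbuilt machine.
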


\section{Hardness of finding the most parsimonious tree displayed by a network}
\label{sec:hard}

In this section we will build on Lemma \ref{lem:sourcesinkhard} and Corollary \ref{cor:sourcesinksapxhard} to prove that 
 computing the most parsimonious tree displayed by a rooted phylogenetic network $N$ with respect to an alignment $A$ is  NP-hard  and APX-hard already for highly restricted instances.

\begin{theorem} \label{thm:NPhardness_MP}
It is NP-hard to compute the most parsimonious tree displayed by a rooted phylogenetic network $N$ with respect to an alignment $A$, even when $N$ is a binary level-1 network with at most 3 outgoing arcs per gall and $A$ consists 
only of two states $\{0,1\}$ and does not contain 
gap symbols.
\end{theorem}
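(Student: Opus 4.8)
The plan is to reduce from MAX-SOURCE-SINKS-ORIENTATION on cubic graphs, which is NP-hard by Lemma~\ref{lem:sourcesinkhard}. Given a cubic graph $G=(V,E)$, I would construct in polynomial time a rooted binary level-1 network $N$ on a taxon set $X$ together with a binary, gap-free alignment $A$, so that the trees displayed by $N$ are in bijection with the orientations of $G$. The mechanism is to realise each edge $e\in E$ by its own gall: since a gall contains a single reticulation node, choosing which of that reticulation's two incoming arcs to keep in a switching is a single binary decision, and I would let this decision encode the orientation of $e$. Because each gall carries exactly one reticulation, $N$ is level-1; and because each edge-gall only needs to interact with the detector gadgets of its two endpoints, the construction can be laid out so that every gall emits at most three outgoing arcs, meeting the stated restriction.

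The heart of the reduction is a local \emph{detector} gadget for each vertex, together with one dedicated binary character (column) $f_v$ of $A$ per vertex $v$. The outgoing arcs of the galls of the three edges incident to $v$ feed into $v$'s detector so that, for the image tree $T_s$ determined by an orientation $s$, the character $f_v$ records how many of the three incident edges point towards $v$. I would design the gadget and fix the states of $f_v$ (setting $f_v$ to $0$ on every leaf outside $v$'s gadget, so those leaves form an inert monochromatic block) so that $l_{f_v}(T_s)$ attains a base value $b$ exactly when $v$ is a source or a sink, and $b+1$ otherwise, with this contribution depending only on the orientations of the three edges at $v$. Writing $\beta(s)$ for the number of sources plus sinks under $s$, the per-vertex contributions then add up to
\[
 l_A(T_s) \;=\; \sum_{v\in V} l_{f_v}(T_s) \;=\; b\,|V| - \beta(s).
\]

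Minimising over switchings gives $l_A(N) = b\,|V| - msso(G)$, so any procedure computing the most parsimonious tree displayed by $N$ would immediately yield $msso(G)$, and NP-hardness transfers; since the construction is approximation-preserving, the same reduction combined with Corollary~\ref{cor:sourcesinksapxhard} would underpin the APX-hardness claimed for the gapped variant. The main obstacle I anticipate is the detector gadget itself: the states of each $f_v$ are fixed in advance, yet the leaf delivered by an incident edge physically migrates between that edge's two endpoints as the edge is reoriented, so a single roaming leaf cannot simultaneously be inert at one endpoint and informative at the other. I would resolve this by giving each edge a small symmetric gadget that reconfigures \emph{locally} at both endpoints under a switching, never moving a leaf out of the region where its fixed state is needed, and I would then verify, using the properties of Fitch's algorithm recalled in the appendix, that each vertex contribution is exactly $b$ or $b+1$, symmetric in the three incident edges, independent of how all other galls are switched, and additive across $V$.
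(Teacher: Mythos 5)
Your high-level strategy is the same as the paper's: reduce from MAX-SOURCE-SINKS-ORIENTATION on cubic graphs, let the switching of a gall encode the orientation of an edge, and use one binary character per vertex $v$ whose parsimony contribution is a constant when $v$ is a source or sink and one more otherwise. However, the proof has a genuine gap precisely at the point you flag as ``the main obstacle'': the detector gadget is never constructed, and the one concrete design decision you do commit to --- setting $f_v$ to $0$ on every leaf outside $v$'s gadget so that those leaves form an ``inert monochromatic block'' --- does not work. A monochromatic block of $0$'s is not inert under parsimony: it pulls the internal nodes of the backbone toward state $0$, so a \emph{sink} at $v$ (all three incident edge-gadgets reporting $0$) costs nothing extra, while a \emph{source} (all three reporting $1$) creates up to three isolated $1$-islands in a sea of $0$'s and incurs extra mutations. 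This destroys the source/sink symmetry on which your $b$ versus $b+1$ dichotomy rests. The paper's construction avoids this by making the non-incident leaves genuine wildcards (gap symbols) in an intermediate alignment, so that column $v$'s cost depends \emph{only} on the three gadgets for edges incident to $v$, and then eliminates the gaps via the cherry-splitting trick of Observation~\ref{obs:recode}, which shifts the score by a known constant without changing the optimizer. Some device of this kind is indispensable for the gap-free statement of Theorem~\ref{thm:NPhardness_MP}, and your proposal contains no substitute for it. (There is also a small arithmetic slip: with per-vertex costs $b$ and $b+1$ the total is $(b+1)|V|-\beta(s)$, not $b|V|-\beta(s)$, though this does not affect the argument.)

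A second, related omission is the ``roaming leaf'' problem itself, which you correctly identify but defer. A single gall with one reticulation gives one binary choice, but that choice must simultaneously deliver orientation-dependent, mutually consistent states into \emph{two} different vertex-columns ($u$ and $v$) without any leaf changing which column it informs. The paper resolves this with a $6$-taxon, $2$-reticulation gadget $N_e$ per edge whose leaf labels in columns $u$ and $v$ are fixed in advance; of its four switchings, two are strictly suboptimal and an explicit exchange argument (using properties of Fitch's algorithm) shows they may be assumed away, leaving exactly the two switchings that report $01$ or $10$ at the gadget root. Without an explicit gadget and without that exchange argument, your claimed bijection between displayed trees and orientations, and the additivity and locality of the per-vertex contributions, remain assertions rather than proofs.
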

\begin{proof}

Let $G = (V,E)$ be a cubic instance of MAX-SOURCE-SINKS-ORIENTATION. We will start by building
a binary level-1 network $N$ with $6|E|$ taxa and $2|E|$ reticulations, and an alignment $A$ on states $\{0,1,\text{``-''}\}$ consisting of $6|E|$ sequences, each sequence of length $|V|$. (We  will remove the ``-'' symbols later). One can thus view $A$ as a $\{0,1,-\}$ matrix with $6|E|$ rows and $|V|$ columns, or
equivalently as a set of $|V|$ characters for the $6|E|$ taxa of $N$.

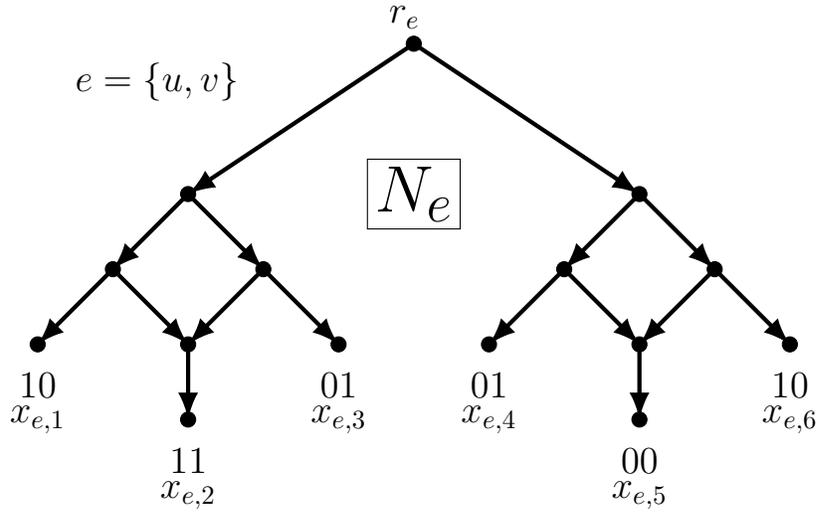
\begin{figure}
\centering
\begin{tikzpicture}[scale=1]
\begin{scope}[xshift=0cm,yshift=0cm]
\draw [fill] (-3,-2) circle (0.1);
\draw [fill] (0,0) circle (0.1) node[above,text width=1cm,align=center] {\Large{$r_e$\vspace{1mm}\phantom{0}}};
\draw [fill] (3,-2) circle (0.1);
\draw[-Latex,ultra thick] (0,0) to (3,-2);
\draw[-Latex,ultra thick] (0,0) to (-3,-2);
\draw [fill] (-4,-3) circle (0.1);
\draw [fill] (-2,-3) circle (0.1);
\draw [fill] (4,-3) circle (0.1);
\draw [fill] (2,-3) circle (0.1);
\draw [fill] (-3,-4) circle (0.1);
\draw [fill] (3,-4) circle (0.1);
\draw[-Latex,ultra thick] (-3,-2) to (-4,-3);
\draw[-Latex,ultra thick] (-3,-2) to (-2,-3);
\draw[-Latex,ultra thick] (3,-2) to (4,-3);
\draw[-Latex,ultra thick] (3,-2) to (2,-3);
\draw[-Latex,ultra thick] (-4,-3) to (-3,-4);
\draw[-Latex,ultra thick] (-2,-3) to (-3,-4);
\draw[-Latex,ultra thick] (4,-3) to (3,-4);
\draw[-Latex,ultra thick] (2,-3) to (3,-4);
\draw [fill] (-5,-4) circle (0.1) node[below,text width=1cm,align=center] {\Large{\\10\\$x_{e,1}$}};
\draw [fill] (-1,-4) circle (0.1) node[below,text width=1cm,align=center] {\Large{\\01\\$x_{e,3}$}};
\draw [fill] (1,-4) circle (0.1) node[below,text width=1cm,align=center] {\Large{\\01\\$x_{e,4}$}};
\draw [fill] (5,-4) circle (0.1) node[below,text width=1cm,align=center] {\Large{\\10\\$x_{e,6}$}};
\draw [fill] (-3,-5) circle (0.1) node[below,text width=1cm,align=center] {\Large{\\11\\$x_{e,2}$}};
\draw [fill] (3,-5) circle (0.1) node[below,text width=1cm,align=center] {\Large{\\00\\$x_{e,5}$}};
\draw[-Latex,ultra thick] (-4,-3) to (-5,-4);
\draw[-Latex,ultra thick] (-2,-3) to (-1,-4);
\draw[-Latex,ultra thick] (-3,-4) to (-3,-5);
\draw[-Latex,ultra thick] (2,-3) to (1,-4);
\draw[-Latex,ultra thick] (4,-3) to (5,-4);
\draw[-Latex,ultra thick] (3,-4) to (3,-5);
\node[draw] at (0,-2) {\Huge{$N_e$}};
\node[text width=3cm] at (-3,-0.5)  {\Large{$e=\{u,v\}$}};
\end{scope}
\end{tikzpicture}
\caption{Although each sequence has length $|V|$, only columns $u$ and $v$ are shown. For
this edge, the other $|V|-2$ symbols are ``-''.\label{fig:stevenfig1}}
\end{figure}

To construct $N$, we start by taking a rooted binary caterpillar on $|E|$ taxa. For each $e \in E$ replace the taxon $x_e$ of the caterpillar with a copy $N_e$ of the network shown in Figure \ref{fig:stevenfig1}. The 6 taxa within $N_e$ are denoted
$x_{e,i}$, $i \in \{1,\ldots,6\}$.  We use $r_e$ to refer to the root of $N_e$.

To construct the alignment, we write $A_{e,i}$ $(e \in E, i \in \{1, \ldots 6\})$ to refer to
the sequences, and write $A_{e,i,v}$ to refer to the state in its $v$th column.
These states are assigned as follows.
For each edge $e = \{u,v\} \in E$, we set the states of the 6 taxa $A_{e,i,u}$ ($i \in \{1, \ldots, 6\}$)
to be $1, 1, 0, 0, 0, 1$, the
states of the 6 taxa $A_{e,i,v}$ ($i \in \{1, \ldots, 6\}$) to be
$0, 1, 1, 1, 0, 0$, and for each $w \not \in \{u,v\}$, we set
the states of the 6 taxa $A_{e,i,w}$ ($i \in \{1, \ldots, 6\}$) to all be ``-''. Given
that each edge is incident to exactly 3 edges, there are exactly $k := 6|V|(|E|-3)$ ``-''
symbols in $A$.

\begin{figure}
\centering
\begin{tikzpicture}[scale=0.5]
\begin{scope}[xshift=0cm,yshift=0cm]
\node[text width=0.2cm] at (-3.1,-1.4)  {$\cup$};
\node[text width=0.2cm] at (3.1,-1.4)  {$\cup$};
\node[text width=0.2cm] at (-3.4,-3.2)  {$\cup$};
\node[text width=0.2cm] at (2.6,-3.2)  {$\cup$};
\draw [fill] (-3,-2) circle (0.1) node[left,align=center]{\small{\{0,1\}1}};
\draw [fill] (0,0) circle (0.1) node[above,text width=2cm,align=center]{\small{$\cup$\\\{0,1\}\{0,1\}}};
\draw [fill] (3,-2) circle (0.1) node[right,align=center]{\small{\{0,1\}0}};
\draw[-Latex,thick] (0,0) to (3,-2);
\draw[-Latex,thick] (0,0) to (-3,-2);
\draw [fill] (-4,-3) circle (0.1)  node[left,align=center]{\small{1\{0,1\}}};
\draw [fill] (2,-3) circle (0.1) node[left,align=center]{\small{0\{0,1\}}};
\draw[-Latex,thick] (-3,-2) to (-4,-3);
\draw[-Latex,thick] (-3,-2) to (-1,-4);
\draw[-Latex,thick] (3,-2) to (5,-4);
\draw[-Latex,thick] (3,-2) to (2,-3);
\draw[-Latex,thick] (-4,-3) to (-3,-5);
\draw[-Latex,thick] (2,-3) to (3,-5);
\draw [fill] (-5,-4) circle (0.1) node[below,text width=1cm,align=center] {\\10\\$x_{e,1}$};
\draw [fill] (-1,-4) circle (0.1) node[below,text width=1cm,align=center] {\\01\\$x_{e,3}$};
\draw [fill] (1,-4) circle (0.1) node[below,text width=1cm,align=center] {\\01\\$x_{e,4}$};
\draw [fill] (5,-4) circle (0.1) node[below,text width=1cm,align=center] {\\10\\$x_{e,6}$};
\draw [fill] (-3,-5) circle (0.1) node[below,text width=1cm,align=center] {\\11\\$x_{e,2}$};
\draw [fill] (3,-5) circle (0.1) node[below,text width=1cm,align=center] {\\00\\$x_{e,5}$};
\draw[-Latex,thick] (-4,-3) to (-5,-4);
\draw[-Latex,thick] (2,-3) to (1,-4);
\node[draw,text width=1.9cm] at (-4,0.5)  {Switching 1\\5 mutations};
\end{scope}
\end{tikzpicture}
\begin{tikzpicture}[scale=0.5]
\begin{scope}[xshift=0cm,yshift=0cm]
\node[text width=0.2cm] at (-3.1,-1.4)  {$\cup$};
\node[text width=0.2cm] at (3.1,-1.4)  {$\cup$};
\node[text width=0.2cm] at (-3.4,-3.2)  {$\cup$};
\node[text width=0.2cm] at (3.4,-3.2)  {$\cup$};
\draw [fill] (-3,-2) circle (0.1) node[left,align=center]{\small{\{0,1\}1}};
\draw [fill] (0,0) circle (0.1) node[above,text width=2cm,align=center]{\small{01}};
\draw [fill] (3,-2) circle (0.1) node[right,align=center]{\small{0\{0,1\}}};
\draw[-Latex,thick] (0,0) to (3,-2);
\draw[-Latex,thick] (0,0) to (-3,-2);
\draw [fill] (-4,-3) circle (0.1)  node[left,align=center]{\small{1\{0,1\}}};
\draw [fill] (4,-3) circle (0.1) node[right,align=center]{\small{\{0,1\}0}};
\draw[-Latex,thick] (-3,-2) to (-4,-3);
\draw[-Latex,thick] (-3,-2) to (-1,-4);
\draw[-Latex,thick] (3,-2) to (5,-4);
\draw[-Latex,thick] (3,-2) to (1,-4);
\draw[-Latex,thick] (-4,-3) to (-3,-5);
\draw [fill] (-5,-4) circle (0.1) node[below,text width=1cm,align=center] {\\10\\$x_{e,1}$};
\draw [fill] (-1,-4) circle (0.1) node[below,text width=1cm,align=center] {\\01\\$x_{e,3}$};
\draw [fill] (1,-4) circle (0.1) node[below,text width=1cm,align=center] {\\01\\$x_{e,4}$};
\draw [fill] (5,-4) circle (0.1) node[below,text width=1cm,align=center] {\\10\\$x_{e,6}$};
\draw [fill] (-3,-5) circle (0.1) node[below,text width=1cm,align=center] {\\11\\$x_{e,2}$};
\draw [fill] (3,-5) circle (0.1) node[below,text width=1cm,align=center] {\\00\\$x_{e,5}$};
\draw[-Latex,thick] (-4,-3) to (-5,-4);
\draw[-Latex,thick] (4,-3) to (3,-5);
\node[draw,text width=1.9cm] at (-4,0.5)  {Switching 2\\4 mutations};
\end{scope}
\end{tikzpicture}
\\
\begin{tikzpicture}[scale=0.5]
\begin{scope}[xshift=0cm,yshift=0cm]
\node[text width=0.2cm] at (-3.1,-1.4)  {$\cup$};
\node[text width=0.2cm] at (3.1,-1.4)  {$\cup$};
\node[text width=0.2cm] at (3.4,-3.2)  {$\cup$};
\node[text width=0.2cm] at (-2.7,-3.2)  {$\cup$};
\draw [fill] (-3,-2) circle (0.1) node[left,align=center]{\small{1\{0,1\}}};
\draw [fill] (0,0) circle (0.1) node[above,text width=2cm,align=center]{\small{$\cup$\\\{0,1\}\{0,1\}}};
\draw [fill] (3,-2) circle (0.1) node[right,align=center]{\small{0\{0,1\}}};
\draw[-Latex,thick] (0,0) to (3,-2);
\draw[-Latex,thick] (0,0) to (-3,-2);
\draw [fill] (-2,-3) circle (0.1)  node[right,align=center]{\small{\{0,1\}1}};
\draw [fill] (4,-3) circle (0.1) node[right,align=center]{\small{\{0,1\}0}};
\draw[-Latex,thick] (-3,-2) to (-5,-4);
\draw[-Latex,thick] (-3,-2) to (-1,-4);
\draw[-Latex,thick] (3,-2) to (5,-4);
\draw[-Latex,thick] (3,-2) to (1,-4);
\draw[-Latex,thick] (-2,-3) to (-3,-5);
\draw [fill] (-5,-4) circle (0.1) node[below,text width=1cm,align=center] {\\10\\$x_{e,1}$};
\draw [fill] (-1,-4) circle (0.1) node[below,text width=1cm,align=center] {\\01\\$x_{e,3}$};
\draw [fill] (1,-4) circle (0.1) node[below,text width=1cm,align=center] {\\01\\$x_{e,4}$};
\draw [fill] (5,-4) circle (0.1) node[below,text width=1cm,align=center] {\\10\\$x_{e,6}$};
\draw [fill] (-3,-5) circle (0.1) node[below,text width=1cm,align=center] {\\11\\$x_{e,2}$};
\draw [fill] (3,-5) circle (0.1) node[below,text width=1cm,align=center] {\\00\\$x_{e,5}$};
\draw[-Latex,thick] (4,-3) to (3,-5);
\node[draw,text width=1.9cm] at (-4,0.5)  {Switching 3\\5 mutations};
\end{scope}
\end{tikzpicture}
\begin{tikzpicture}[scale=0.5]
\begin{scope}[xshift=0cm,yshift=0cm]
\node[text width=0.2cm] at (-3.1,-1.4)  {$\cup$};
\node[text width=0.2cm] at (3.1,-1.4)  {$\cup$};
\node[text width=0.2cm] at (2.6,-3.2)  {$\cup$};
\node[text width=0.2cm] at (-2.7,-3.2)  {$\cup$};
\draw [fill] (-3,-2) circle (0.1) node[left,align=center]{\small{1\{0,1\}}};
\draw [fill] (0,0) circle (0.1) node[above,text width=2cm,align=center]{\small{10}};
\draw [fill] (3,-2) circle (0.1) node[right,align=center]{\small{\{0,1\}0}};
\draw[-Latex,thick] (0,0) to (3,-2);
\draw[-Latex,thick] (0,0) to (-3,-2);
\draw [fill] (-2,-3) circle (0.1)  node[right,align=center]{\small{\{0,1\}1}};
\draw [fill] (2,-3) circle (0.1) node[left,,text width=0.8cm,align=center]{\small{0\{0,1\}}\\[3mm]\phantom{a}};
\draw[-Latex,thick] (-3,-2) to (-5,-4);
\draw[-Latex,thick] (-3,-2) to (-1,-4);
\draw[-Latex,thick] (3,-2) to (5,-4);
\draw[-Latex,thick] (3,-2) to (1,-4);
\draw[-Latex,thick] (-2,-3) to (-3,-5);
\draw [fill] (-5,-4) circle (0.1) node[below,text width=1cm,align=center] {\\10\\$x_{e,1}$};
\draw [fill] (-1,-4) circle (0.1) node[below,text width=1cm,align=center] {\\01\\$x_{e,3}$};
\draw [fill] (1,-4) circle (0.1) node[below,text width=1cm,align=center] {\\01\\$x_{e,4}$};
\draw [fill] (5,-4) circle (0.1) node[below,text width=1cm,align=center] {\\10\\$x_{e,6}$};
\draw [fill] (-3,-5) circle (0.1) node[below,text width=1cm,align=center] {\\11\\$x_{e,2}$};
\draw [fill] (3,-5) circle (0.1) node[below,text width=1cm,align=center] {\\00\\$x_{e,5}$};
\draw[-Latex,thick] (2,-3) to (3,-5);
\node[draw,text width=1.9cm] at (-4,0.5)  {Switching 4\\4 mutations};
\end{scope}
\end{tikzpicture}
\caption{The four switchings possible for $N_e$. The interior nodes are labelled by the output of the bottom-up phase of Fitch's algorithm, for the two characters concerned. The $\cup$ symbol denotes where union events occur (i.e. mutations are incurred). The critical point is that both switching 2 and 4 incur the fewest
number of mutations, and these select for $01$ and $10$ at the root, respectively, representing the choice of which way to orient edge $e$.}
\label{fig:stevenfig2}
\end{figure}
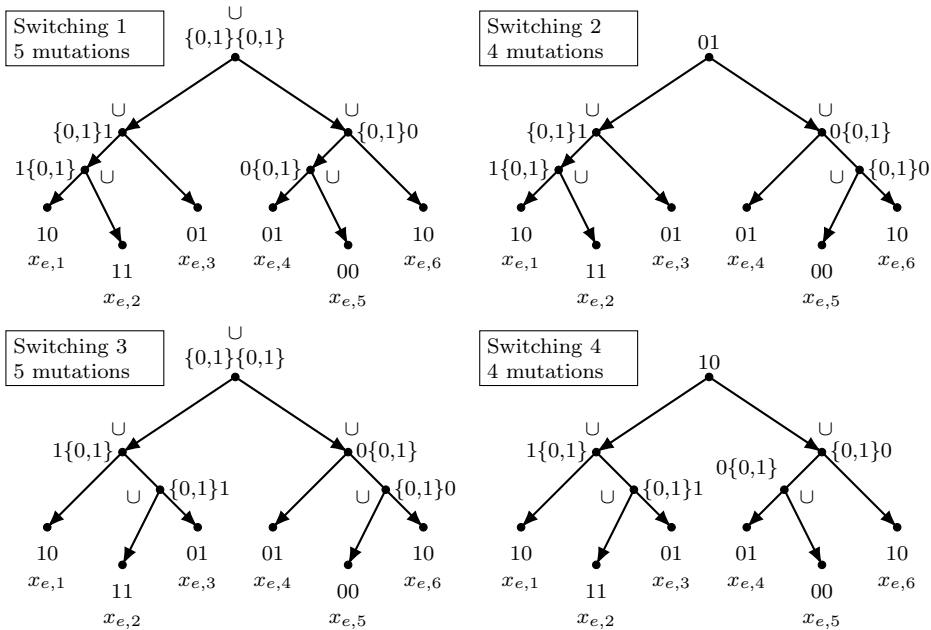

Given that each $N_e$ contains 2 reticulations, there are $2^2 = 4$ different switchings of these reticulations possible, shown in Figure 2. Note that switchings 1 and 3 both induce 5 mutations, while switchings 2 and 4 both induce 4 mutations. (Here by ``induce mutations'' we are referring to properties (i) and (ii) of Fitch's algorithm, described in the appendix). We now claim that there exists an optimum solution in which only switchings 2 and 4 are used. Suppose, for some $e = \{u,v\} \in E$, switching
1 or 3 is used.  Let $T$ be the tree induced by this switching. Fix any optimal extension of $A$ to $T$.  Let $T_e$ be the subtree of $T$ rooted at $r_e$; at least 5 mutations will be incurred on the edges of $T_e$ (with respect to the extension; see property (i) of Fitch's algorithm). Consider now the states allocated to $r_e$ in
columns $u$ and $v$. There are four such $uv$ combinations: 00, 01, 10, 11. If it is combination 01 or 10, we could replace $T_e$ with the subtree corresponding to switching 2 or 4 (respectively). This replacement subtree incurs only 4 mutations on its edges, so the total number of mutations in $T$ decreases. If it is combination 00 or 11 we can use switching 2. This might induce a new mutation (on the edge incoming to $r_e$) but we again save at least one mutation on the edges of the subtree (because at most 4, rather than at least 5 mutations are incurred there), so the overall number of mutations does not increase. Summarizing, whichever combination 00, 01, 10, 11 occurs at $r_e$, we can replace it with switching 2 or 4 without increasing the total number of mutations. Iterating this procedure proves the claim. Henceforth we can thus assume that for each $e \in E$ either switching 2 or 4 is used.

Observe that if, for a given $e = \{u,v\}$, the network $N_e$ uses switching 2, the bottom-up phase of Fitch's algorithm will allocate 01 (in columns $u$ and $v$) to $r_e$. If, on the other hand, switching 4 is used, Fitch's algorithm will allocate 10. In both cases, exactly 4 union events are generated on the nodes (of the subtree of $N_e$ induced by the switching). See Figure 2 for elucidation. 

The central idea is that, since, for an edge $e=\{u,v\}$,  a state 0 (resp. 1)  in $v$ implies a state 1 (resp. 0) in $u$ and vice versa, we can use the choice of whether to use switching 2 or 4 (for each of the $|E|$ reticulation pairs) to encode a choice as to which way to orient the corresponding edge. Without loss of generality we use state 0 to denote incoming edges, and state 1 to denote outgoing edges. Consider the bottom-up phase of Fitch's algorithm. Observe that, if a vertex $v$ incident to three edges $e_1, e_2, e_3$ becomes a \emph{sink}, the states at the roots of $N_{e_1}, N_{e_2}, N_{e_3}$ (in column $v$) will
\emph{all} be $0$, and for each $e' \not \in \{e_1, e_2, e_3\}$ the states at the
root of $N_{e'}$ (in column $v$) will be ``-'' i.e. ``don't care''\footnote{Fitch's algorithm is not well-defined on ``-'' symbols, but the intuition is that it behaves exactly like the subset of states $\{0,1\}$ behaves in Fitch's algorithm. This, in fact, is exactly how the construction described in Observation \ref{obs:recode} removes ``-'' symbols  from the alignment.}. Continuing Fitch's algorithm along the backbone of the caterpilllar shows that no mutations 
will be incurred on the edges of the caterpillar in column $v$.
A completely symmetrical situation holds if a vertex becomes a \emph{source}: 
the states at the roots of $N_{e_1}, N_{e_2}, N_{e_3}$ (in column $v$) will \emph{all} be 1, and again
no mutations are incurred on the edges of the caterpillar. On the other hand, if a vertex $v$ is neither
a source nor a sink, then the states assigned by the bottom-up phase of Fitch's algorithm to the roots of $N_{e_1}, N_{e_2}, N_{e_3}$ (in column $v$) will consist of $0$ (twice) and $1$ (once) or
$1$ (twice) and $0$ (once). Either way exactly 1 mutation is then incurred on the
edges of the caterpillar (as can be observed by running the top-down phase of Fitch's algorithm).

This means that the parsimony score is minimized by creating as many sources and sinks as possible. Specifically we have
\[
l_A(N) = 6|V| + (|V| - msso(G)).
\]
Each edge in the graph will induce 4 mutations (within the $N_e$ part),
and $|E| = 3|V|/2$, which explains the term $6|V|$. As argued above, sources and sinks to do not increase the parsimony score, and all other
vertices increase the parsimony score by exactly 1, hence the term $(|V| - msso(G))$.

Clearly $msso(G)$ can easily be calculated from $l_A(N)$. Finally, we can apply Observation
\ref{obs:recode} to obtain a network $N'$ and $A'$ without ``-'' symbols such that
\[
l_{A'}(N') = 6|V|(|E|-3) + 6|V| + (|V| - msso(G))
\]
The transformation does not raise the level of the network or the number of arcs outgoing from any biconnected component. NP-hardness follows. \qed
\end{proof}


\noindent
If we \emph{do} allow ``-'' symbols then the following slightly stronger
result is obtained: APX-hardness implies NP-hardness but additionally excludes the existence of a  Polynomial Time Approximation Scheme (PTAS), unless P=NP. APX-hardness does not obviously hold if we encode the gap symbols using Observation \ref{obs:recode} because the additive $O(|V||E|)$ term thus created distorts the objective function.

\begin{corollary}
It is APX-hard to compute the most parsimonious tree displayed by a rooted phylogenetic network $N$ with respect to an alignment $A$, even when $N$ is a binary level-1 network with at most 3 outgoing arcs per gall and $A$ consists only of states $\{0,1,\text{``-''}\}$.
\end{corollary}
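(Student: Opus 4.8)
The plan is to observe that the construction $f$ from the proof of Theorem~\ref{thm:NPhardness_MP}, used \emph{without} the recoding step of Observation~\ref{obs:recode}, is already an approximation-preserving (L-)reduction from MAX-SOURCE-SINKS-ORIENTATION on cubic graphs, which is APX-hard by Corollary~\ref{cor:sourcesinksapxhard}. Retaining the gap symbols is what makes this work: the alignment $A$ produced by $f$ then satisfies $l_A(N) = 6|V| + (|V| - msso(G)) = 7|V| - msso(G)$, with no additive $O(|V||E|)$ term. Hence the parsimony optimum stays within a constant factor of $msso(G)$, which is exactly what the recoding of Observation~\ref{obs:recode} would destroy.

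To verify the L-reduction I would check the two standard conditions, writing $\mathit{opt}_{\mathrm{par}} = l_A(N) = 7|V| - msso(G)$ for the (minimisation) parsimony optimum and $\mathit{opt}_{\mathrm{mss}} = msso(G)$ for the (maximisation) source--sink optimum. The first condition, $\mathit{opt}_{\mathrm{par}} \le \alpha \cdot \mathit{opt}_{\mathrm{mss}}$ for a constant $\alpha$, requires a linear lower bound $msso(G) = \Omega(|V|)$. I would obtain this by an averaging argument: in a uniformly random orientation of a cubic graph each vertex becomes a source with probability $1/8$ and a sink with probability $1/8$, so the expected number of sources plus sinks is $|V|/4$, whence $msso(G) \ge |V|/4$. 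Then $\mathit{opt}_{\mathrm{par}} \le 7|V| \le 28\,msso(G)$, so $\alpha = 28$ works.

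For the second condition I need a polynomial-time back-map $g$ sending any tree $T$ displayed by $N$ to an orientation $\sigma$ of $G$ with $|\mathit{opt}_{\mathrm{mss}} - s_\sigma| \le \beta\,|\mathit{opt}_{\mathrm{par}} - l_A(T)|$, where $s_\sigma$ denotes the number of sources plus sinks of $\sigma$. I would define $g$ exactly via the normalisation already in the theorem: first rewrite $T$ so that every $N_e$ uses switching $2$ or $4$ (the iterative replacement argument gives a tree $T'$ with $l_A(T') \le l_A(T)$), then read the orientation $\sigma$ off these switching choices. Because every switching-$2/4$ configuration incurs exactly $4$ mutations inside each $N_e$ and exactly one caterpillar mutation per vertex that is neither a source nor a sink, one gets the clean identity $l_A(T') = 7|V| - s_\sigma$. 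Combined with $\mathit{opt}_{\mathrm{par}} = 7|V| - \mathit{opt}_{\mathrm{mss}}$ this yields $\mathit{opt}_{\mathrm{mss}} - s_\sigma = l_A(T') - \mathit{opt}_{\mathrm{par}} \le l_A(T) - \mathit{opt}_{\mathrm{par}}$, so $\beta = 1$.

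With $\alpha = 28$ and $\beta = 1$ this is a genuine L-reduction, and APX-hardness transfers from MAX-SOURCE-SINKS-ORIENTATION to the parsimony problem (equivalently, a PTAS for the latter would yield one for the former, contradicting Corollary~\ref{cor:sourcesinksapxhard}). I expect the main obstacle to be the first condition rather than the second, since the identity $l_A(T') = 7|V| - s_\sigma$ is essentially already established inside the proof of Theorem~\ref{thm:NPhardness_MP}; the point needing care is the linear lower bound on $msso(G)$, where one should confirm that the averaging argument is valid for the precise cubic instances output by the reduction underlying Corollary~\ref{cor:sourcesinksapxhard} (e.g.\ checking the graph-class conventions for parallel edges and connectivity).
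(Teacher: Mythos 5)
Your proposal is correct and follows essentially the same route as the paper: the identical forward map (keeping the gap symbols), the identical back-map via normalising each $N_e$ to switching 2 or 4 and reading off an orientation, and the same $\beta=1$ chain of inequalities. The only difference is cosmetic: the paper lower-bounds $msso(G)\ge |V|/2$ via $msso(G)=maxcut(G)-|V|/2$ and $maxcut(G)\ge |V|$, giving $\alpha=14$, whereas your averaging argument gives the weaker but equally serviceable $msso(G)\ge |V|/4$ and $\alpha=28$.
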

\begin{proof}
We give a $(14,1)$ L-reduction from $msso$, which is APX-hard, to the parsimony
problem. L-reductions preserve APX-hardness so the result will follow. An $(\alpha, \beta)$ L-reduction (\cite{papayanna}), where $\alpha, \beta \geq 0$, is defined as follows.

\begin{definition}
Let $A,B$ be two optimization problems and $c_A$ and $c_B$ their respective cost functions. A pair of functions $f,g$, both computable in polynomial time, constitute an $(\alpha, \beta)$
L-reduction from $A$ to $B$ if the following conditions are true:

\begin{enumerate}
\item For every instance $x$ of $A$, $f(x)$ is an instance of $B$,
\item For every feasible solution $y$ of $f(x)$, $g(y)$ is a feasible solution of $x$,
\item For every instance $x$ of $A$, $OPT_B(f(x)) \leq \alpha OPT_A(x)$,
\item For every feasible solution $y'$ of $f(x)$ we have $|OPT_A(x) - c_A(g(y'))| \leq \beta |OPT_B(f(x)) - c_B(y')|$
\end{enumerate}
where $OPT_{A}$ is the optimal solution value of problem $A$ and similarly for $B$.
\end{definition}

 For brevity we refer to the optimum size of the parsimony problem as $mp(N,A)$. We use the reduction described in the proof of 
Theorem \ref{thm:NPhardness_MP} (before the gap symbols have been
removed) with some slight modifications. The forward-mapping function $f$ (condition 1 of the L-reduction) is the same mapping used in the proof of Theorem \ref{thm:NPhardness_MP}. The back-mapping function $g$ (i.e. condition 2) will be described below. To establish condition 3 for a given $(\alpha, \beta)$ we need to prove that $mp(N,A) \leq \alpha \cdot msso(G)$. Now,
we know that $msso(G) = maxcut(G)-|V|/2$ (see appendix) and that $maxcut(G) \geq 2/3 |E| = |V|$ (because
every cubic graph has a cut at least this large simply by moving nodes which have more
neighbours on their side of the cut, to the other side). Hence, $msso(G) \geq |V|/2$. We
know that $mp(N,A) = 7|V| - msso(G)$. Trivially therefore $mp(N,A) \leq 7|V|$. Hence
taking $\alpha=14$ is sufficient. For the other direction, we need to show that for
an arbitrary solution to the parsimony problem, which induces $p$ mutations, the
back-mapping function yields an orientation of $G$ with $s$ sources and sinks such that
$|msso(G)-s| \leq \beta |p-mp(N,A)|$. The back-mapping function $g$ first ensures that
all the $N_e$ gadgets are using type 2 or type 4 switchings, which might reduce
the number of mutations to $p' \leq p$, and then extracts an orientation of $G$ (thus establishing condition 2).  Now, $s = 7|V|-p'$ and $mp(N,A) = 7|V|-msso(G)$ so
\begin{align*}
msso(G) - s &= msso(G) - (7|V|-p')\\
&= msso(G) - 7|V| + p'\\
&= p' - (7|V| - msso(G))\\
& = p' - mp(N,A)\\
& \leq p - mp(N,A).
\end{align*}
So taking $\beta = 1$ is sufficient to establish condition 4. \qed
\end{proof}

\section{Hardness of finding the most likely tree displayed by a network}
\label{sec:MLtree}

\textbf{Preliminaries on the likelihood of a tree}.
We now introduce the basic concepts and notation that are necessary to 
define the likelihood of a tree with respect to an alignment. First, 
we need a probabilistic model describing how sequences evolve along a tree.
Here we assume the simplest model available, 
known as the Cavender-Farris model (\cite{farris1973,cavender1978}), 
which can be described as follows.
Let $T = (V,E)$ be a rooted binary phylogenetic tree on $X$. 
We associate probabilities $\mathbf{p}=(p_e)_{e\in E} \in [0,\nicefrac{1}{2}]^{|E|}$
to the edges of $T$ and denote this $(T,\mathbf{p})$. 
Under the Cavender-Farris model, each character evolves independently, as follows: 
at the root pick randomly a state between 0 and 1, each with probability $\nicefrac{1}{2}$, 
and then, for each vertex $v$ below the root, either copy the state of the parent of $v$ or flip it,
with probabilities $1-p_e$ and $p_e$, respectively.
The restriction $p_e<\nicefrac{1}{2}$ corresponds to the fact
that, in a symmetric model, no amount of time can make a character 
more likely to change state than to remain in the same state.

The process described above eventually associates
a state to each element of $X$ at the leaves of the tree, that is, it generates a 
random binary character.
The probability of generating the binary character $f$ is called the \emph{likelihood} of $(T,\mathbf{p})$
with respect to $f$, denoted $L_f(T,\mathbf{p})$, and can be calculated as follows:
\[
L_f(T,\mathbf{p}) =  \sum_{\hat{f}} \frac{1}{2} \prod_{e = (u,v) \in E} p_e^{|\hat{f}(v)-\hat{f}(u)|} (1-p_e)^{1-|\hat{f}(v)-\hat{f}(u)|}
\]
Here $\hat{f}$ ranges over all extensions of $f$ to $T$. 
Because the model assumes that the characters in a sequence evolve independently,
the probability of generating the binary sequences in an alignment $A$, named the \emph{likelihood} of $(T,\mathbf{p})$
with respect to $A$, denoted $L_A(T,p)$, can be obtained as
\[
L_A(T,\mathbf{p}) = \prod_{f \in A} L_f(T,\mathbf{p})
\]
(Here, and in the rest of this section, we assume that alignments do not contain gap symbols.)

We now introduce some more notation that will be useful in the following. 
An \emph{extension} $\widehat{A}$ of an alignment $A$ to a tree $T=(V,E)$ is a 
set of functions $\hat{f}: V \rightarrow \{0,1\}$ obtained by taking exactly one
extension of each character in $A$. In practice, $\widehat{A}$ can be 
represented as a matrix with $|V|$ rows and $|A|$ columns, in which 
the rows corresponding to the leaves of $T$ are identical to the rows of $A$.
For $e=(u,v) \in E$, we denote by $h_e(\widehat{A})$ the
number of differences (that is, the Hamming distance) between the sequences
that $\widehat{A}$ associates to $u$ and $v$.
Finally, let $l_{\widehat{A}}(T)$ denote 
$\sum_{e\in E} h_{e}(\widehat{A}) = \sum_{\hat{f}\in \widehat{A}} l_{\hat{f}}(T)$.  
Note that the parsimony score $l_{A}(T)$ is the minimum of $l_{\widehat{A}}(T)$
over all extensions of $A$. Given these notations, we can express the likelihood
of $(T,\mathbf{p})$ as follows, where $m=|A|=|\widehat{A}|$, and $\widehat{A}$
ranges over all extensions of $A$:
\begin{equation}
L_A(T,\mathbf{p}) = \sum_{\widehat{A}} 2^{-m}\prod_{e \in E} p_e^{h_{e}(\widehat{A})} (1-p_e)^{m-h_{e}(\widehat{A})} \label{eqn:lik2}
\end{equation}

\textbf{Trees displayed by a network with edge probabilities}.
The notation above can be extended to networks with probabilities
$p_e'\in [0,\nicefrac{1}{2}]$
assigned to their edges, denoted $(N,\mathbf{p'})$. 
We say that a network $(N,\mathbf{p'})$ displays a tree $(T,\mathbf{p})$, 
if $N$ displays $T$ in the usual topological sense 
(i.e. some subdivision $T'$ of $T$ is a subtree of $N$) 
\emph{and} $(T,\mathbf{p})$ can be obtained from $T'$ by repeatedly suppressing
vertices with indegree-1 and outdegree-1, where here suppression
also updates the probabilities. Specifically, if $T'$ contains
two edges $e_1 = (u,v)$ and $e_2 = (v,w)$, where $v$ has indegree-1
and outdegree-1, the suppression operation replaces these two
edges with a single edge $e = (u,w)$ and assigns it the probability
\[
p_e = p_{e_1}(1-p_{e_2}) + (1-p_{e_1})p_{e_2}.
\]
This expresses the probability of having different 
states at the endpoints of a two-edge path, under the Cavender-Farris model.
Note that, in general, a tree $T$ can have multiple distinct images $T'$ 
in the network, so it can occur that $(N,\mathbf{p'})$ displays $(T,\mathbf{p})$ 
for multiple different $\mathbf{p}$. 
Also note that because $p_e'\le\nicefrac{1}{2}$ for all edges in the network,
the same will hold for the edges of the trees it displays, 
as no application of the equation above can produce a probability
$p_e>\nicefrac{1}{2}$ from edge probabilities that are at most 
$\nicefrac{1}{2}$. It is also easy to see that if $0< p_{e_1}, p_{e_2}< \nicefrac{1}{2}$, 
then $\max\{p_{e_1},p_{e_2}\} < p_e < p_{e_1}+p_{e_2}$. 
These observations lead to the following one, which will be useful later on:

\begin{observation} \label{obs:bounds-p_e}
Let $(N,\mathbf{p'})$ be such that for every edge of $N$, $0<p'_e<\nicefrac{1}{2}$.
Let $(T,\mathbf{p})$ be a tree displayed by $(N,\mathbf{p'})$ and $e$ an edge of $T$.
Finally, let $E'(e)$ be the subset of the edges of $N$ 
whose probabilities contribute to $p_e$.
Then, $p_e < \nicefrac{1}{2}$ and
\[
\max_{e'\in E'(e)} p_{e'} < p_e < \sum_{e'\in E'(e)} p_{e'}
\]
\end{observation}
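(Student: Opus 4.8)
The plan is to proceed by induction on $|E'(e)|$, the length of the path in $N$ that is suppressed down to the single edge $e$ of $T$. First I would observe that $E'(e)$ is indeed a directed path in the image $T'$ of $T$ inside $N$, since the edge $e$ arises precisely from repeatedly suppressing indegree-$1$/outdegree-$1$ vertices along this path. Writing $a \oplus b := a(1-b) + (1-a)b = a + b - 2ab$ for the combination rule applied by a single suppression, the probability $p_e$ is obtained by applying $\oplus$ repeatedly to the probabilities $\{p_{e'} : e' \in E'(e)\}$. A key preliminary step is to verify that $\oplus$ is associative (and commutative), so that $p_e$ depends only on the multiset of edge-probabilities along the path and not on the order in which suppressions are carried out; a one-line computation gives $(a\oplus b)\oplus c = a\oplus(b\oplus c) = a+b+c-2ab-2ac-2bc+4abc$. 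This lets me split the path however is convenient in the inductive step.

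For the induction, the base case is $|E'(e)| = 2$, which is exactly the two-edge inequality $\max\{p_{e_1},p_{e_2}\} < p_e < p_{e_1}+p_{e_2}$ together with $p_e<\nicefrac{1}{2}$ already recorded immediately before the statement. (When $|E'(e)|=1$ no suppression occurs and $p_e=p_{e'}$, so the two strict inequalities degenerate to equalities; the substantive content is the case $|E'(e)|\ge 2$, while $p_e<\nicefrac{1}{2}$ holds trivially.) For the inductive step with $|E'(e)|=k\ge 3$, write the path probabilities as $p_1,\dots,p_k$, let $q$ be the result of combining $p_1,\dots,p_{k-1}$, and note $p_e = q \oplus p_k$. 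By the induction hypothesis $0<q<\nicefrac{1}{2}$ and $\max_{i<k}p_i \le q \le \sum_{i<k}p_i$. Since $q,p_k\in(0,\nicefrac{1}{2})$, I then apply the two-edge inequalities to the pair $(q,p_k)$, obtaining $p_e<\nicefrac{1}{2}$ and $\max\{q,p_k\}<p_e<q+p_k$.

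It remains to assemble these estimates. The upper bound follows from $p_e < q + p_k \le \sum_{i<k}p_i + p_k = \sum_{i\le k}p_i$, strict because the step $p_e<q+p_k$ is strict. The lower bound follows from $p_e > \max\{q,p_k\} \ge \max\{\max_{i<k}p_i,\, p_k\} = \max_{i\le k}p_i$. This closes the induction and yields all three claims. I expect the only real obstacle to be bookkeeping rather than mathematics: one must confirm the well-definedness of $p_e$ via associativity of $\oplus$ before freely peeling off the last edge, and track carefully where inequalities are strict versus tight (a single-edge subpath contributes $q=p_1$ with equality, which is why the base case is taken at two edges, the strictness of the final bounds being inherited from the two-edge step). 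As a cross-check, the same associativity gives the closed form $1-2p_e = \prod_{e'\in E'(e)}(1-2p_{e'})$, from which $p_e<\nicefrac{1}{2}$ is immediate and the two bounds reduce, with $x_i=2p_{e'}\in(0,1)$, to the standard product inequality $\prod(1-x_i) > 1-\sum x_i$ and to $\prod_{i\ge 2}(1-x_i)<1$; this alternative route could replace the induction entirely.
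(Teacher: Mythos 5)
Your proof is correct and follows essentially the same route the paper intends: the paper states this observation without proof, as an immediate consequence of the two-edge facts recorded just before it ($p_{e_1}\oplus p_{e_2}<\nicefrac{1}{2}$ and $\max\{p_{e_1},p_{e_2}\}<p_{e_1}\oplus p_{e_2}<p_{e_1}+p_{e_2}$), and your induction along the suppressed path, with the associativity check for well-definedness, is exactly the argument being left implicit. Your remark that the strict inequalities degenerate to equalities when $|E'(e)|=1$ is a fair catch of a small imprecision in the statement, and the closed form $1-2p_e=\prod_{e'\in E'(e)}(1-2p_{e'})$ is a clean alternative not in the paper.
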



We say that $(T^*,\mathbf{p^*})$ is the/a \emph{most likely (ML) 
tree displayed by $(N,\mathbf{p'})$ (with respect to $A$)} if it 
maximizes $L_A(T,\mathbf{p})$, ranging over all $(T, \mathbf{p})$ displayed 
by $(N,\mathbf{p'})$. 
In the remainder of this section we consider the problem 
of finding such a most likely tree given a network 
with edge probabilities and an alignment.

\textbf{A link between likelihood and parsimony}.
There are well-known relationships between the likelihood and the 
parsimony of a tree that imply that under some conditions the most 
likely tree is also a most parsimonious one (\cite{tuffley1997}).
We now illustrate one such relationship (Corollary \ref{cor:ML-MP} below), which is based on the 
observation that as we reduce the scale of a tree, its likelihood converges 
to zero at a rate that only depends on  its parsimony score.
Although it shares similarities with the results by Tuffley and Steel,
we are not aware that it has been explicitly stated in the literature.
This result is \emph{not} necessary to obtain the other results in this section, 
but it provides the intuition behind them.

In the following statements, we assume that $c\in]0,1]$, so the form 
$c\rightarrow 0$ is to be understood as $c$ approaches 0 to the right.
Also, $c\mathbf{p}$ simply denotes the product between the scalar
$c$ and vector $\mathbf{p}$.

\begin{lemma}
The function $f(c) = L_A(T,c\mathbf{p})$ is $\Theta(c^{l_A(T)})$ as $c\rightarrow 0$.
\end{lemma}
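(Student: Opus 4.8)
The plan is to expand $f(c)$ directly from the likelihood formula~\eqref{eqn:lik2}, tracking the power of $c$ carried by each term. Substituting $c\mathbf{p}$ for $\mathbf{p}$, each edge factor $p_e^{h_e(\widehat{A})}$ becomes $(cp_e)^{h_e(\widehat{A})} = c^{h_e(\widehat{A})}p_e^{h_e(\widehat{A})}$, so the contribution of a fixed extension $\widehat{A}$ acquires an overall factor $c^{\sum_e h_e(\widehat{A})} = c^{l_{\widehat{A}}(T)}$, while the remaining factors $\prod_e (1-cp_e)^{m-h_e(\widehat{A})}$ tend to $1$ as $c\to 0$. Thus $f(c)$ is a finite sum of terms, the $\widehat{A}$-th of which is a constant independent of $c$, times $c^{l_{\widehat{A}}(T)}$, times a factor approaching $1$.

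First I would set $L := l_A(T)$ and recall that, as noted in the preliminaries, $L = \min_{\widehat{A}} l_{\widehat{A}}(T)$, the minimum being attained by at least one (optimal) extension. Factoring $c^L$ out of every term writes $f(c) = c^L g(c)$, where
\[
g(c) = \sum_{\widehat{A}} 2^{-m}\, c^{\,l_{\widehat{A}}(T)-L}\Big(\prod_{e\in E} p_e^{h_e(\widehat{A})}\Big)\prod_{e\in E}(1-cp_e)^{m-h_e(\widehat{A})}.
\]
Since every exponent $l_{\widehat{A}}(T)-L$ is a nonnegative integer, $g$ is a polynomial in $c$, hence continuous at $0$. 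Evaluating at $c=0$ kills exactly the terms with $l_{\widehat{A}}(T) > L$ while leaving the optimal extensions untouched (the surviving $(1-cp_e)$ factors become $1$), so that
\[
g(0) = \sum_{\widehat{A}\,:\, l_{\widehat{A}}(T)=L} 2^{-m}\prod_{e\in E} p_e^{h_e(\widehat{A})}.
\]

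The crux is to show $g(0) > 0$, which is what upgrades the estimate from $O(c^L)$ to $\Theta(c^L)$. Because the edge probabilities are strictly positive in the setting of interest ($0<p_e<\nicefrac{1}{2}$, by Observation~\ref{obs:bounds-p_e}), each surviving product is strictly positive, and since the parsimony minimum is attained there is at least one such term; thus $g(0)>0$ with no possibility of cancellation, all summands sharing the same sign. By continuity of $g$ there are constants $0<c_1\le c_2$ and $c_0>0$ with $c_1 \le g(c)\le c_2$ for all $c\in\,]0,c_0]$, whence $c_1 c^{L} \le f(c) \le c_2 c^{L}$, i.e. $f(c)=\Theta(c^{l_A(T)})$.

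The only real obstacle is this positivity of the leading coefficient; the expansion itself is routine and the upper bound $f(c)=O(c^L)$ holds regardless. What makes the $\Theta$ claim work is precisely that the likelihood is a sum of manifestly nonnegative terms, so the dominant-order contributions cannot cancel. If one wished to allow some $p_e=0$, one would additionally have to verify that at least one parsimony-optimal extension places all of its mutations on edges with $p_e>0$; this is why I would carry the standing assumption that all edge probabilities are strictly positive, as is guaranteed here by Observation~\ref{obs:bounds-p_e}.
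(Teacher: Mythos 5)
Your proposal is correct and takes essentially the same route as the paper: expand $L_A(T,c\mathbf{p})$ via Eqn.~(\ref{eqn:lik2}), pull the factor $c^{l_{\widehat{A}}(T)}$ out of each term, observe that the remaining factors tend to constants, and conclude that the lowest degree $l_A(T)$ dominates. You are in fact slightly more careful than the paper in making explicit why the coefficient of $c^{l_A(T)}$ is strictly positive (all summands are nonnegative, at least one optimal extension survives, and $p_e>0$ is needed), a point the paper's one-line "since the lowest degree dominates" leaves implicit.
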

\begin{proof}
Write $L_A(T,c\mathbf{p})$ using Eqn.~(\ref{eqn:lik2}):
\[
L_A(T,c\mathbf{p}) = \sum_{\widehat{A}} 2^{-m}c^{l_{\widehat{A}}(T)}\cdot 
\prod_{e \in E} p_e^{h_{e}(\widehat{A})} (1-cp_e)^{m-h_{e}(\widehat{A})}.
\]
Note that the products above tend to a constant as $c\rightarrow 0$. 
As a consequence, the term for $\widehat{A}$ in the sum 
has order $\Theta(c^{l_{\widehat{A}}(T)})$ as $c\rightarrow 0$. 
Since the lowest degree dominates, their sum is $\Theta(c^{l_A(T)})$. \qed
\end{proof}

\begin{corollary}\label{cor:ML-MP}
Let $A$ be an alignment and $T_1$ and $T_2$ two trees such that 
$l_A(T_1) < l_A(T_2)$.
Then, for any $\mathbf{p_1}$ and $\mathbf{p_2}$,
\[
L_A(T_1,c\mathbf{p_1}) > L_A(T_2,c\mathbf{p_2})\quad\text{for $c$ sufficiently close to 0.}
\]
\end{corollary}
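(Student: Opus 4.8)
The plan is to derive the corollary directly from the preceding Lemma, which tells us that $L_A(T_1,c\mathbf{p_1}) = \Theta(c^{l_A(T_1)})$ and $L_A(T_2,c\mathbf{p_2}) = \Theta(c^{l_A(T_2)})$ as $c\rightarrow 0$. The whole point is that $\Theta$ (as opposed to a mere $O$) supplies a strictly \emph{positive} multiplicative lower bound as well as an upper bound, and this two-sided control is exactly what lets us compare the two likelihoods once their leading exponents are known to differ.

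Concretely, I would first unpack the two $\Theta$ statements into explicit inequalities valid on a common punctured neighbourhood of $0$: there exist constants $a_1>0$ and $b_2>0$ and a threshold $\delta>0$ such that, for all $c\in(0,\delta)$,
\[
L_A(T_1,c\mathbf{p_1}) \ge a_1\, c^{l_A(T_1)} \qquad\text{and}\qquad L_A(T_2,c\mathbf{p_2}) \le b_2\, c^{l_A(T_2)}.
\]
(I only need the lower bound for $T_1$ and the upper bound for $T_2$.) It then suffices to show that $a_1\, c^{l_A(T_1)} > b_2\, c^{l_A(T_2)}$ for $c$ small, since this chains through the two displayed inequalities to yield the claimed $L_A(T_1,c\mathbf{p_1}) > L_A(T_2,c\mathbf{p_2})$.

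Next I would divide by $c^{l_A(T_1)}>0$ and rearrange, reducing the target to $c^{\,l_A(T_2)-l_A(T_1)} < a_1/b_2$. Here is where the discrete structure of parsimony enters: since $l_A(T_1)$ and $l_A(T_2)$ are non-negative integers (each is a sum of mutation counts) with $l_A(T_1)<l_A(T_2)$, the exponent $l_A(T_2)-l_A(T_1)$ is a strictly positive integer, so the left-hand side tends to $0$ as $c\rightarrow 0^+$. As $a_1/b_2$ is a fixed positive constant, the inequality holds for all sufficiently small $c$, completing the argument.

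I do not expect a genuine obstacle: the substantive content has been pushed entirely into the Lemma, and the remaining work is just asymptotic bookkeeping together with the integrality of parsimony scores. The one point that deserves care — and the only place the argument could break — is the positivity of the lower-bound constant $a_1$, which is precisely what distinguishes $\Theta$ from $O$ and what the Lemma guarantees (the coefficient of the leading power $c^{l_A(T_1)}$ is a sum of strictly positive terms, one per minimum-mutation extension). I would simply invoke the Lemma for this rather than re-derive it.
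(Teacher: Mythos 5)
Your proposal is correct and follows essentially the same route as the paper: the paper's own proof is a one-line appeal to the preceding lemma's $\Theta$ asymptotics (``$L_A(T_1,c\mathbf{p_1})$ converges to 0 at a lower rate than $L_A(T_2,c\mathbf{p_2})$''), and you have simply unpacked the same argument into explicit constants and inequalities. Nothing is missing; note only that the integrality of $l_A(T_2)-l_A(T_1)$ is not needed, just its strict positivity.
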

\begin{proof}
As $c\rightarrow 0$, $L_A(T_1,c\mathbf{p_1})$ converges to 0 at a lower rate
than $L_A(T_2,c\mathbf{p_2})$. Thus there exists a neighborhood of 0 in which the desired inequality holds. \qed
\end{proof}

The corollary above can be extended to any collection of trees: irrespective of the
edge probabilities assigned to them, if the trees are rescaled by a sufficiently
small $c$, the most parsimonious trees will have likelihoods greater than all 
the other trees, meaning that a most likely tree in the collection of rescaled
trees will necessarily also be most parsimonious. 





\textbf{Proving the NP-hardness of finding an ML tree in a network}.
In the remainder of this section, namely in the statements of the next
two formal results, we are implicitly given a network $N$ on $X$ with $|X|=n$
and an alignment $A$ with $m$ characters on $X$.
The \emph{height} of a network $N$ is the maximum number of edges in a
directed path in $N$.

\begin{lemma}\label{lem:bounds}
Let $(N,\mathbf{c})$ be a network of height $d_N$,  
where all the edges are assigned a constant probability $c_e=c$,
with $0<c<\nicefrac{1}{2}$.
Let $(T,\mathbf{p})$ be a tree displayed by $(N,\mathbf{c})$. Then,
\[
2^{-2mn} \cdot c^{l_A(T)} < L_A(T,\mathbf{p}) < 
2^{mn} \cdot d_N^{2mn} \cdot c^{l_A(T)}
\]
\end{lemma}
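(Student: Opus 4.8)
The plan is to start from the expansion of the likelihood in Equation~(\ref{eqn:lik2}) and to control it summand by summand using two-sided bounds on the tree probabilities $p_e$. The crucial preliminary step is to convert the hypothesis that every edge of $N$ carries the same probability $c$ into per-edge bounds on each $p_e$. Applying Observation~\ref{obs:bounds-p_e} with $p'_{e'}=c$ for all $e'\in E'(e)$ gives $\max_{e'} p_{e'}=c$ and $\sum_{e'} p_{e'}=|E'(e)|\,c$, hence $c<p_e<|E'(e)|\,c$. Since $E'(e)$ is exactly the set of edges of $N$ lying on the directed path that is suppressed into the single edge $e$ of $T$, and any directed path in $N$ has at most $d_N$ edges, we get $|E'(e)|\le d_N$ and therefore $c<p_e<d_N\,c$ for every edge $e$ of $T$. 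Observation~\ref{obs:bounds-p_e} also gives $p_e<\nicefrac{1}{2}$, so $\nicefrac{1}{2}<1-p_e<1$. These are the only analytic facts needed; everything else is bookkeeping.

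For the upper bound I would bound a single summand of~(\ref{eqn:lik2}). Using $p_e<d_N c$ and $1-p_e<1$, the product over edges is at most $\prod_{e}(d_N c)^{h_e(\widehat{A})}=d_N^{\,l_{\widehat{A}}(T)}\,c^{\,l_{\widehat{A}}(T)}$. Because $l_{\widehat{A}}(T)\ge l_A(T)$ (the parsimony score is the minimum) and $c<1$, the factor $c^{l_{\widehat{A}}(T)}$ is at most $c^{l_A(T)}$; and because $l_{\widehat{A}}(T)\le m|E|\le 2mn$, the factor $d_N^{l_{\widehat{A}}(T)}$ is at most $d_N^{2mn}$. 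Each summand is thus at most $2^{-m} d_N^{2mn} c^{l_A(T)}$. Multiplying by the number of extensions $\widehat{A}$, which is $2^{(n-1)m}<2^{mn}$ (one free binary choice per internal node per column), and absorbing the harmless factor $2^{-m}<1$, yields the claimed $L_A(T,\mathbf{p})<2^{mn} d_N^{2mn} c^{l_A(T)}$.

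For the lower bound I would discard all summands but one, namely the one indexed by an optimal extension $\widehat{A}^*$ with $l_{\widehat{A}^*}(T)=l_A(T)$; this is legitimate since every summand is positive. Applying $p_e>c$ and $1-p_e>\nicefrac{1}{2}$ to that summand gives $2^{-m}\prod_e c^{h_e}(\nicefrac{1}{2})^{m-h_e}=2^{-m} c^{l_A(T)} (\nicefrac{1}{2})^{m|E|-l_A(T)}$. Bounding $m|E|-l_A(T)\le m|E|=2mn-2m$ turns the last factor into at least $2^{2m-2mn}$, and collecting the powers of two leaves $2^{m}\cdot 2^{-2mn} c^{l_A(T)}>2^{-2mn} c^{l_A(T)}$.

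I expect no conceptual obstacle: once the per-edge bounds $c<p_e<d_N c$ and $\nicefrac{1}{2}<1-p_e<1$ are in hand, both inequalities are routine. The only point requiring care is the constant bookkeeping --- verifying that the number of extensions ($2^{(n-1)m}$), the number of edges of $T$ ($|E|=2n-2$), and the stray $2^{\mp m}$ prefactors all combine to give exactly the stated exponents $2^{-2mn}$ and $2^{mn}d_N^{2mn}$ rather than something slightly larger. Establishing $|E'(e)|\le d_N$ from the definition of height, and confirming that each suppressed segment is genuinely a directed path of $N$, is the one place where the geometry of the embedding rather than pure arithmetic is used.
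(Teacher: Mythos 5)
Your proof is correct and follows essentially the same route as the paper's: expand the likelihood via Eqn.~(\ref{eqn:lik2}), use Observation~\ref{obs:bounds-p_e} to get $c<p_e<\min\{cd_N,\nicefrac{1}{2}\}$, bound every summand and multiply by the number $2^{m(n-1)}$ of extensions for the upper bound, and keep only the optimal extension's summand for the lower bound. The constant bookkeeping matches the paper's (the paper even derives the slightly tighter intermediate bounds $2^{m(n-1)-m}d_N^{m(2n-2)}c^{l_A(T)}$ and $2^{-m(2n-1)}c^{l_A(T)}$ before relaxing them to the stated ones).
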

\begin{proof}
Using Observation \ref{obs:bounds-p_e}, we note that,
for any edge $e\in E$ of $T=(V,E)$, $c<p_e<\min\{cd_N,\nicefrac{1}{2}\}$.

We begin by proving the upper bound in the statement.
From Eqn.~(\ref{eqn:lik2}) and the fact that $(1-p_e)^{m-h_{e}(\widehat{A})} <1$,
we get the first inequality in the following:
\begin{align*}
L_A(T,\mathbf{p}) &< \sum_{\widehat{A}} 2^{-m}\prod_{e \in E} (cd_N)^{h_{e}(\widehat{A})} \\
&= \sum_{\widehat{A}} 2^{-m} d_N^{l_{\widehat{A}}(T)} c^{l_{\widehat{A}}(T)} 
\quad < 2^{m(n-1)-m} d_N^{m(2n-2)} c^{l_A(T)},
\end{align*}
where the last inequality is obtained by noting that the sum has $2^{m(n-1)}$ terms (there are $n-1$ internal nodes in a rooted binary tree, and thus $2^{m(n-1)}$ different extensions of $A$), and that $l_A(T) \le l_{\widehat{A}}(T) \le m(2n-2)$  (there are $2n-2$ branches in a rooted binary tree, and thus we cannot have more than $2n-2$ changes per character).
The upper bound in the statement is larger than the one above.

As for the lower bound, if we use $c<p_e<\nicefrac{1}{2}$ 
in Eqn.~(\ref{eqn:lik2}):
\begin{align*}
L_A(T,\mathbf{p}) &> \sum_{\widehat{A}} 2^{-m}\prod_{e \in E} c^{h_{e}(\widehat{A})} \nicefrac{1}{2}^m\\
&= \sum_{\widehat{A}} 2^{-m} c^{l_{\widehat{A}}(T)} 2^{-m(2n-2)} \quad > 2^{-m(2n-1)} c^{l_A(T)},
\end{align*}
where the last inequality is obtained by taking only one term in the sum. 
The lower bound in the statement is smaller than the one above. \qed
\end{proof}

The lemma above shows the order of convergence to 0 of the likelihood $L_A(T,\mathbf{p})$
of a tree displayed by $(N,\mathbf{c})$ as  $c\rightarrow 0$.
The higher the parsimony score, the faster the convergence.
As a consequence, for $c$ sufficiently close to 0, a tree with a lower parsimony 
score than another will have a higher likelihood.
The following lemma shows how close is ``sufficiently close'', by providing an
explicit upper bound to $c$.

\begin{proposition}
Let $(N,\mathbf{c})$ be a network of height $d_N$,  
where all the edges are assigned a constant probability $c_e=c$,
with $0<c<d_N^{-2mn}2^{-3mn}$.
If $(T^*,\mathbf{p^*})$ is a most likely tree displayed by $(N,\mathbf{c})$,
then $T^*$ is a most parsimonious tree displayed by $N$.
\end{proposition}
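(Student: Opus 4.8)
The plan is to argue by contradiction using the two-sided bounds of Lemma~\ref{lem:bounds}, exploiting the fact that parsimony scores are integers. Intuitively, since the likelihood of any displayed tree is sandwiched between two explicit multiples of $c^{l_A(T)}$, and since $c$ is taken extremely small, a strictly lower parsimony score must translate into a strictly higher likelihood; hence the most likely tree cannot afford to carry a suboptimal parsimony score. This is the effective, quantitative version of the asymptotic phenomenon already recorded in Corollary~\ref{cor:ML-MP}.

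Concretely, first I would let $T'$ be any most parsimonious tree displayed by $N$, equipped with a probability vector $\mathbf{p'}$ inherited from $(N,\mathbf{c})$; such a $(T',\mathbf{p'})$ exists because $T'$ is displayed by $(N,\mathbf{c})$, so Lemma~\ref{lem:bounds} applies to it. Since $(T^*,\mathbf{p^*})$ is a most likely displayed tree, its likelihood is at least that of $(T',\mathbf{p'})$. Chaining the lower bound of Lemma~\ref{lem:bounds} applied to $T'$ with the upper bound applied to $T^*$ then yields
\[
2^{-2mn}\, c^{l_A(T')} \;<\; L_A(T',\mathbf{p'}) \;\le\; L_A(T^*,\mathbf{p^*}) \;<\; 2^{mn}\, d_N^{2mn}\, c^{l_A(T^*)}.
\]

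The crucial step is to convert this into a constraint on $c$. Suppose, for contradiction, that $T^*$ is not most parsimonious, so that $l_A(T^*) > l_A(T')$. Because parsimony scores are integers, this gives $l_A(T^*) \ge l_A(T') + 1$, and since $0 < c < 1$ (which holds as $c < d_N^{-2mn}2^{-3mn} < 1$) we may bound $c^{l_A(T^*)} \le c\cdot c^{l_A(T')}$. Substituting into the displayed inequality and cancelling the common positive factor $c^{l_A(T')}$ leaves $2^{-2mn} < 2^{mn} d_N^{2mn}\, c$, i.e. $c > d_N^{-2mn}2^{-3mn}$, directly contradicting the hypothesis on $c$. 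Hence $l_A(T^*) \le l_A(T')$; as $T'$ is most parsimonious the reverse inequality is automatic, so $l_A(T^*) = l_A(T')$ and $T^*$ is most parsimonious.

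I expect the only delicate points to be bookkeeping rather than conceptual. The first is checking that the constants in the two bounds of Lemma~\ref{lem:bounds} multiply out to exactly the threshold $d_N^{-2mn}2^{-3mn}$ appearing in the hypothesis, which is precisely what makes the final contradiction tight. The second is the (routine but essential) observation that integrality of the parsimony score is what converts ``$c$ small enough'' into the clean explicit bound on $c$; without it one would only recover a Corollary~\ref{cor:ML-MP}-style asymptotic statement. No genuinely hard obstacle arises here, since Lemma~\ref{lem:bounds} already packages all of the analytic content.
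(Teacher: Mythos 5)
Your proposal is correct and follows essentially the same route as the paper: both apply the lower bound of Lemma~\ref{lem:bounds} to a strictly more parsimonious displayed tree, the upper bound to $T^*$, and use integrality of the parsimony score together with the hypothesis $c<d_N^{-2mn}2^{-3mn}$ to derive a contradiction. The only cosmetic difference is that you rearrange the chain of inequalities into a constraint on $c$ rather than directly exhibiting $L_A(T,\mathbf{p})>L_A(T^*,\mathbf{p^*})$; the algebra is identical.
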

\begin{proof}
Suppose that $(T^*,\mathbf{p^*})$ is a most likely tree displayed by $(N,\mathbf{c})$, but not most parsimonious.
That is, there exists $(T,\mathbf{p})$ displayed by $(N,\mathbf{c})$ with $l_A(T)\le l_A(T^*)-1$.
But then, by using the lower bound in Lemma \ref{lem:bounds}:
\[
L_A(T,\mathbf{p}) > 2^{-2mn} \cdot c^{l_A(T)} \ge 2^{-2mn} \cdot c^{l_A(T^*)-1}.
\]
Now apply the upper bound in Lemma \ref{lem:bounds} to $T^*$, and combine it with $c<d_N^{-2mn}2^{-3mn}$:
\[
L_A(T^*,\mathbf{p^*}) 
< 2^{mn} \cdot d_N^{2mn} \cdot c^{l_A(T^*)} 
< 2^{mn} \cdot d_N^{2mn} \cdot c^{l_A(T^*)-1} \cdot d_N^{-2mn} \cdot 2^{-3mn} = 2^{-2mn} \cdot c^{l_A(T^*)-1}
\]
The last terms of the two chains of inequalities above are equal, thus proving $L_A(T,\mathbf{p})>L_A(T^*,\mathbf{p^*})$.
Since this contradicts the assumption that $(T^*,\mathbf{p^*})$ is a most likely tree, the statement follows. \qed
\end{proof}

The proposition above shows that the NP-hard problem of finding the/a most 
parsimonious tree in a network $N$ with respect to an alignment $A$
can be reduced to the problem of finding the/a most 
likely tree in $(N,\mathbf{c})$ with respect to $A$,
where $\mathbf{c}=(c_e)$ is such that $c_e=c$,
and $0<c<d_N^{-2mn}2^{-3mn}$.
Since the reduction preserves the network and the alignment, 
the main result of this section follows from Theorem \ref{thm:NPhardness_MP}:

\begin{theorem} 
It is NP-hard to compute the most likely tree $(T,\mathbf{p})$ displayed by a 
rooted phylogenetic network $(N,\mathbf{p'})$ with respect to an alignment $A$, 
even when $N$ is a binary level-1 network with at most 3 outgoing arcs per gall 
and $A$ consists only of two states $\{0,1\}$ 
and does not contain gap symbols.
\end{theorem}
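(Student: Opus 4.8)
The plan is to give a polynomial-time reduction from the most parsimonious tree problem, which is NP-hard by Theorem~\ref{thm:NPhardness_MP}, to the most likely tree problem, using the preceding Proposition as the engine. Given an instance $(N,A)$ of the parsimony problem, where $N$ is a binary level-1 network with at most 3 outgoing arcs per gall and $A$ is a two-state alignment without gap symbols, I would construct the likelihood instance $((N,\mathbf{c}),A)$ by leaving $N$ and $A$ untouched and assigning to \emph{every} edge of $N$ the same probability $c_e = c$, for a suitably small value $c$ to be specified. Because the network and the alignment are copied verbatim, all the structural restrictions in the statement (level-1, at most 3 outgoing arcs per gall, binary states, no gaps) are automatically inherited from the parsimony instance.

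First I would compute the quantities on which the threshold of the Proposition depends: $n=|X|$, $m=|A|$, and the height $d_N$ of $N$, all of which are obtainable in polynomial time. I would then fix $c$ to be any value strictly below $d_N^{-2mn}2^{-3mn}$, for instance $c = 2^{-\lceil 2mn\log_2 d_N + 3mn + 1\rceil}$. By the Proposition, with this choice of $c$ every most likely tree $(T^*,\mathbf{p^*})$ displayed by $(N,\mathbf{c})$ has the property that its topology $T^*$ is a most parsimonious tree displayed by $N$ with respect to $A$. Hence a polynomial-time algorithm for the ML problem, applied to $((N,\mathbf{c}),A)$, would return a tree whose topology solves the original MP problem; since the latter is NP-hard, so is the former.

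The step I expect to require the most care is verifying that the reduction is genuinely polynomial, specifically that the threshold value $c$ can be encoded in polynomially many bits. The bound $d_N^{-2mn}2^{-3mn}$ is exponentially small, so one cannot write it out in full; however, its base-2 logarithm, $-2mn\log_2 d_N - 3mn$, is polynomial in the input size, because $m$, $n$ and $\log_2 d_N$ are each polynomially bounded. Choosing $c$ as a negative power of $2$ whose exponent exceeds this quantity therefore requires only polynomially many bits to record, so the whole instance $((N,\mathbf{c}),A)$ has polynomial size and is constructed in polynomial time. Note also that $c < d_N^{-2mn}2^{-3mn} < \nicefrac{1}{2}$, so the chosen probabilities are admissible for the Cavender-Farris model. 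With this observation in place the reduction is complete, and the theorem follows immediately from Theorem~\ref{thm:NPhardness_MP} together with the Proposition.
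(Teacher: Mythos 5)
Your proposal is correct and follows essentially the same route as the paper: the paper likewise reduces from Theorem~\ref{thm:NPhardness_MP} by keeping $N$ and $A$ unchanged, assigning every edge the constant probability $c<d_N^{-2mn}2^{-3mn}$, and invoking the preceding Proposition to conclude that any ML tree topology is an MP tree. Your additional remark that $c$ can be chosen as a negative power of $2$ encodable in polynomially many bits is a worthwhile detail that the paper leaves implicit.
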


\section{Conclusions and open problems}


We have shown that, given a phylogenetic network with a sequence for each leaf, finding a most parsimonious or most likely tree displayed by the network is computationally intractable (NP-hard). Moreover, this is the case even when we restrict to binary sequences and level-1 networks; the simplest networks that are not trees. However, many computational problems that can be shown to be theoretically intractable can be solved reasonably efficiently in practice (see e.g. \emph{Cautionary Tales of Inapproximability} by \cite{budden2017cautionary}). We end the paper by discussing whether we expect this to be the case for our problem.

There is a dynamic programming algorithm, described in Theorem 5.7 of~\cite{fischer2015computing}, for finding a tree in a network that is most parsimonious with respect to a single character. The running time is fixed-parameter tractable, with as parameter the level of the network. Hence, this algorithm is practical as long as the level of the network is not too large. This algorithm can easily be extended to multiple characters (that all have to choose the same tree) when the number of characters is adopted as a second parameter. Indeed, for every root of a biconnected component, we introduce a dynamic programming entry not just for every possible state but for every possible sequence of states. However, the running time of this algorithm would be exponential in the number of characters, which makes it useless for almost all biological data. Similarly, the Integer Linear Programming (ILP) solution presented in the same paper can also be easily extended to multiple characters. However, there does not seem to be an easy way to do that without having the number of variables growing linearly in the number of characters. Hence, this approach is also unlikely to be useful in practice.

In contrast, consider the simple algorithm that loops through the at most $2^r$ trees displayed by the network, with~$r$ the number of reticulation nodes in the network, and computes the parsimony or likelihood of each tree (this na\"ive FPT algorithm was presented in \cite{nakhleh2005reconstructing}, where it is named \emph{Net2Trees}). Ironically, this simple algorithm would outperform the approaches mentioned above for any kind of data with a reasonably large number of characters. Hence, the main open question that remains is whether there exists an algorithm whose running time is linear (or at least polynomial) in the number of characters and whose dependency on~$r$ is better than~$2^r$ (for example recently an algorithm with exponential base smaller than 2 was discovered for the tree containment problem (\cite{gunawan2016program}), although this algorithm does not obviously extend to generating all trees in the network). Another question of  interest that remains open is the following: does the parsimony problem under restrictions (1)-(3) listed in the introduction permit good (i.e. constant factor) approximation algorithms, and possibly even a PTAS, when the alignment $A$ does not contain any indels?

\bibliographystyle{spbasic}
\bibliography{findingParsimony.bib}

\section*{Acknowledgments}
Leo van Iersel was partly supported by the Netherlands Organization for Scientific Research (NWO), including Vidi grant 639.072.602, and partly by the 4TU Applied Mathematics Institute. Celine Scornavacca was partly supported by  the French Agence Nationale de la Recherche Investissements d'Avenir/Bioinformatique (ANR-10-BINF-01-02, Ancestrome).
\pagebreak 

\appendix

\section{Appendix: Fitch's algorithm}

Fitch's algorithm \cite{fitch1971} has two phases. In the first phase, known as the \emph{bottom-up} phase, we start by
assigning the singleton subset of states $\{f(x)\}$ to each taxon $x$. The internal nodes of $T$
are assigned subsets of states recursively, as follows. Suppose a node $p$ has two children
$u$ and $v$, and the bottom-up phase has already assigned subsets $F(u)$ and $F(v)$ to the two children, respectively. If $F(u) \cap F(v) \neq \emptyset$ then set $F(p) = F(u) \cap F(v)$ (in which case we say that $p$ is an \emph{intersection} node). If $F(u) \cap F(v) = \emptyset$ then set $F(p) = F(u) \cup F(v)$ (in which case we say that $p$ is a \emph{union} node). The number of union nodes in the bottom-up phase is equal to $l_f(T)$. To actually create an optimal extension $\hat{f}$, we require the \emph{top-down} phase of Fitch's algorithm. Start at the root $r$ and let $\hat{f}(r)$ be any element in
$F(r)$. For an internal node $u$ with parent $p$, we set $\hat{f}(u) = \hat{f}(p)$ (if $\hat{f}(p) \in F(u)$) and otherwise (i.e. $\hat{f}(p) \not \in F(u)$) set $\hat{f}(u)$ to
be an arbitrary element of $F(u)$.\\
\\
For each node $u$ of the tree, let $\cup(u)$ be the number of union events in the subtree rooted at $u$. The following well-known properties of Fitch's algorithm are used repeatedly in the main hardness proof of this article: (i) \emph{every} extension (optimal or otherwise) must incur
\emph{at least} $\cup(u)$ mutations on the edges of the
subtree rooted at $u$; (ii) an extension created
by Fitch's algorithm induces \emph{exactly} $\cup(u)$ mutations
on the edges of the subtree rooted at $u$ (and
$u$ is assigned a state from $F(u)$ in this
extension).

\section{Appendix: NP-hardness and APX-hardness of MAX-SOURCE-SINKS-ORIENTATION}

The following result is based on a sketch proof by Colin McQuillan\footnote{TCS Stack Exchange, 2010, URL: \url{http://cstheory.stackexchange.com/questions/2307/an-edge-partitioning-problem-on-cubic-graphs/}}. We have been unable to find an original reference and hence have reconstructed the proof in detail. The APX-hardness proof is original.\\

\noindent
\textbf{Lemma \ref{lem:sourcesinkhard}}. \emph{MAX-SOURCE-SINKS-ORIENTATION is NP-hard on cubic graphs.}
\begin{proof}
Recall that the classical MAX-CUT problem asks us to bipartition the vertex set of an unidrected graph $G$, such that the number of edges that cross the bipartition is maximized. We reduce from the NP-hard problem CUBIC-MAX-CUT which is the restriction of the MAX-CUT problem to cubic graphs. Given an undirected cubic graph $G$, we simply write $maxcut(G)$ to denote the number of edges in the maximum-size cut.

We reduce CUBIC-MAX-CUT to MAX-SOURCE-SINKS-ORIENTATION. Specfically, given an undirected cubic graph $G = (V,E)$ (i.e. an instance of CUBIC-MAX-CUT) we will show that $msso(G) =maxcut(G) - |V|/2$, from which the hardness will follow.

We start by proving that $msso(G) \geq maxcut(G) - |V|/2$. Fix \steven{an arbitrary cut $C$} of $G$ and let $(U,W)$ be the corresponding bipartition. If some vertex of $U$ or
$W$ has more neighbours on the other side of the partition than its own, move it to the other
side of the partition: this will increase the size of the cut. We repeat this until it is no
longer possible and let $C$ and $(U, W)$ refer to the cut and its induced partition at the end of this process. Note that now each vertex in $U$ (respectively, $W$) will have at most one neighbour in $U$ (respectively, $W$). We proceed by orienting the
edges in the cut from $U$ to $W$. Now, the remaining edges are either internal to $U$ or internal to $W$. These edges must form a matching (i.e. they are node disjoint). For each such edge in $U$ (respectively, $W$), exactly one endpoint will become a source (respectively, sink). Nodes in $U$ (respectively, $W$) that are not adjacent to internal edges will already be sources (respectively, sinks) due to the orientation of the cut edges from $U$ to $W$. Hence, if
we write $|C|$ to denote the number of edges in the cut $C$, we obtain an orientation
of $G$ with at least
\[
(|E| -|C|) + (|V| - 2(|E|-|C|))
\]
sources and sinks. Hence,
\begin{align*}
msso(G) &\geq (|E| -maxcut(G)) + (|V| - 2(|E|-maxcut(G)))\\
 & = maxcut(G) + |V| - |E| \\
& = maxcut(G) + |V| - (3/2)|V| \\
& = maxcut(G) - |V|/2. 
\end{align*}

For the other direction, \steven{fix an arbitrary orientation of $G$ and let $s$ be the number of sources and sinks
created by the orientation}. We write $V_i$ ($i \in \{0,1,2,3\}$) to denote those vertices of $G$ which
have indegree $i$. Let $U = V_0 \cup V_1$ and let $W = V_2 \cup V_3$. Whenever
an edge of $G$ has been oriented from $W$ to $U$, reverse its
orientation: this only decreases the indegrees of the vertices in $U$ and increases the
indegrees of vertices in $W$ so it cannot destroy any sources or sinks and it cannot cause a node to be on the ``wrong'' side of the bipartition. \steven{(In fact, it will cause the number of
sources and sinks to increase, so this situation can only occur if the orientation was not
optimal). Let $s$ now refer to the number of sources and sinks once all arcs have been
oriented from $U$ to $W$.} \steven{The} edges $(u,w)$ such that $u \in U$ and $w \in W$ form a cut; it remains only to count how many of these edges there are. We first count
from the perspective of the vertices in $U$. The nodes in $V_0$ each generate 3 outgoing
arcs. Let $k$ be the total number of edges of the form $(u_0, u_1)$ where $u_0 \in V_0$
and $u_1 \in V_1$. Note that each node in $V_1$ that does not receive any of these
$k$ arcs, must receive an arc which is outgoing from some other node in $V_1$. 
It follows that the number of edges in the cut is
\[
(3|V_0| - k) + (2|V_1| - (|V_1|-k)) = 3|V_0| + |V_1|.
\]
If we count in a symmetrical fashion from the perspective of $W$, and let $\ell$ be the number
of arcs of the form $(u_2, u_3)$ where $u_2 \in V_2$ and $u_3 \in V_3$, it follows that the
number of edges in the cut is
\[
(3|V_3| - \ell) + (2|V_2| - (|V_2| - \ell)) = 3|V_3| + |V_2|.
\]
If we sum these two equations,  we obtain \steven{a cut with at least the following number
of edges:}
\begin{align*}
&\geq (3/2)( |V_0| + |V_3| ) + (1/2)( |V_1| + |V_2| ) \\
&= (3/2) s + (1/2)(|V| - s)\\ 
&=  s + (1/2)|V|.
\end{align*}
From this follows that $msso(G) \leq maxcut(G) - (1/2)|V|$. 
\end{proof}

The hardness of $msso$ can be strengthened to the following inapproximability result. Note
that one consequence of APX-hardness is that $msso$ does not permit a PTAS, unless $P=NP$.\\
\\ 
\noindent
\textbf{Corollary \ref{cor:sourcesinksapxhard}}. \emph{MAX-SOURCE-SINKS-ORIENTATION is APX-hard on cubic graphs.}
\begin{proof}
Note that the constructions and transformations used in the proof of  Lemma \ref{lem:sourcesinkhard} are all constructive and can easily be conducted in polynomial time.
Moreover, they apply to arbitrary cuts/orientations, and not just optimal ones. This allows
us to easily strengthen the described reduction to obtain a $(1, 1)$ L-reduction from 
 CUBIC-MAX-CUT to MAX-SOURCE-SINKS-ORIENTATION (see the main text for the definition of L-reduction). From this APX-hardness will
follow, since CUBIC-MAX-CUT is APX-hard (\cite{alimontikann,bermankarpinski}) and L-reductions are APX-hardness preserving. The
$(1,1)$ means that the inapproximability threshhold for MAX-SOURCE-SINKS-ORIENTATION
is at least as strong as that for CUBIC-MAX-CUT.

Let $G = (V,E)$ be an instance of CUBIC-MAX-CUT. The forward mapping function $f$ (from instances of CUBIC-MAX-CUT to MAX-SOURCE-SINKS-ORIENTATION) is simply the identity function. To be an $(\alpha,\beta)$ L-reduction, where $\alpha,\beta \geq 0$, we first have to show that $msso( f(G) ) = msso( G ) \leq \alpha \cdot maxcut(G)$. We know that
$maxcut(G) - (1/2)|V| = msso(G)$, so $\alpha = 1$ is trivially satisfied. We next have to show
a polynomial-time computable back-mapping function $g$ (from feasible solutions of
 MAX-SOURCE-SINKS-ORIENTATION to feasible solutions of CUBIC-MAX-CUT) with the
following property: an orientation that induces $s$ sources and sinks is mapped to
a cut with $k$ edges such that $|maxcut(G)-k| \leq \beta|msso(G) - s|$. The
function $g$ was already implicitly described in the NP-hardness reduction: reverse any
edges oriented from $W$ to $U$ (which possibly increases the number of
sources and sinks to $s' \geq s$) and extract a cut of size $s' + |V|/2$. Observe,
\begin{align*}
maxcut(G) - (s' +|V|/2) &\leq maxcut(G) - (s +|V|/2)\\
&\leq (msso(G)+|V|/2) - (s+|V|/2)\\
&\leq msso(G) -s.
\end{align*}
So taking $\beta=1$ is sufficient.
\end{proof}



%
%

\end{document}